\newcommand{\dd}{{\mathrm d}}
\newcommand{\e}{{\rm e}}
\newcommand{\E}{{\mathbb E}}
\newcommand{\Q}{{\mathbb Q}}
\newcommand{\R}{{\mathbb R}}
\newcommand{\N}{{\mathbb N}}
\newcommand{\Fcal}{{\mathcal F}}
\newcommand{\Gcal}{{\mathcal G}}
\newcommand{\Ocal}{{\mathcal O}}
\def\thm@space@setup{%
  \thm@preskip=\parskip \thm@postskip=0pt
}
\newtheorem{proposition}{Proposition}[section]
\newtheorem{lemma}[proposition]{Lemma}
\newtheorem{corollary}[proposition]{Corollary}
\newtheorem{remark}[proposition]{Remark}
\title{Asian Option Pricing with Orthogonal Polynomials\footnote{I thank Damien Ackerer, Damir Filipovi\'c, participants at the 9th International Workshop on Applied Probability in Budapest, and two anonymous referees for helpful comments. The research leading to these results has received funding from the European Research Council under the European Union's Seventh Framework Programme (FP/2007-2013) / ERC Grant Agreement n.~307465-POLYTE.}}
\author{Sander Willems\footnote{EPFL and Swiss Finance Institute, 1015 Lausanne, Switzerland. Email: sander.willems@epfl.ch}}
\date{September 14, 2018\\ forthcoming in \emph{Quantitative Finance}}
\numberwithin{equation}{section}
\begin{document}

\maketitle

\begin{abstract}
In this paper we derive a series expansion for the price of a continuously sampled arithmetic Asian option in the Black-Scholes setting. The expansion is based on polynomials that are orthogonal with respect to the log-normal distribution. All terms in the series are fully explicit and no numerical integration nor any special functions are involved. We provide sufficient conditions to guarantee convergence of the series. The moment indeterminacy of the log-normal distribution introduces an asymptotic bias in the series, however we show numerically that the bias can safely be ignored in practice.

\medskip

\noindent \textbf{JEL Classification}: C32, G13

\medskip
\noindent \textbf{Keywords}: Asian option, option pricing, orthogonal polynomials
\end{abstract}

\section{Introduction}
An Asian option is a derivative contract with payoff contingent on the average value of the underlying asset over a certain time period. Valuation of these contracts is not straightforward because of the path-dependent nature of the payoff. Even in the standard \cite{black1973pricing} setting the distribution of the (arithmetic) average stock price is not known. In this paper we derive a series expansion for the Asian option price in the Black-Scholes setting using polynomials that are orthogonal with respect to the log-normal distribution. The terms in the series are fully explicit since all the moments of the average price are known. We prove that the series does not diverge by showing that the tails of the average price distribution are dominated by the tails of a log-normal distribution. As a consequence of the well known moment indeterminacy of the log-normal distribution (see e.g., \cite{heyde2010property}), it is not theoretically guaranteed that the series converges to the true price. We show numerically that this asymptotic bias is small for standard parameterizations and the real approximation challenge lies in controlling the error coming from truncating the series after a finite number of terms.

There exists a vast literature on the problem of Asian option pricing. We give a brief overview which is by no means exhaustive. One approach is to approximate the unknown distribution of the average price with a more tractable one. \cite{turnbull1991quick}, \cite{levy1992pricing}, \cite{ritchken1993valuation}, \cite{li2016pricing} use an Edgeworth expansion around a log-normal reference distribution to approximate the distribution of the arithmetic average of the geometric Brownian motion. \cite{ju2002pricing} and \cite{sun2013quasi} use a Taylor series approach to approximate the unknown average distribution from a log-normal. \cite{milevsky1998asian} use a moment matched inverse gamma distribution as approximation. Their choice is motivated by the fact that the infinite horizon average stock price has an inverse gamma distribution. More recently, \cite{aprahamian2015pricing} use a moment matched compound gamma distribution. Although these type of approximations lead to analytic option price formulas, their main drawback is the lack of reliable error estimates. A second strand of the literature focuses on Monte-Carlo and PDE methods. \cite{kemna1990pricing} propose to use the continuously sampled geometric option price as a control variate and show that it leads to a significant variance reduction. \cite{fu1999pricing} argue that this is a biased control variate, but interestingly the bias approximately offsets the bias coming from discretely computing the continuous average in the simulation. \cite{lapeyre2001competitive} perform a numerical comparison of different Monte-Carlo schemes. \cite{rogers1995value}, \cite{zvan1996robust}, \cite{vecer2001new,vecer2002unified}, \cite{marcozzi2003valuation} solve the pricing PDE numerically. Another approach is to derive bounds on the Asian option price, see e.g. \cite{curran1994valuing}, \cite{rogers1995value}, \cite{thompson2002fast}, and \cite{vanmaele2006bounds}. Finally, there are several papers that derive exact representations of the Asian option price. \cite{yor1992some} expresses the option price as a triple integral, to be evaluated numerically. \cite{geman1993bessel} derive the Laplace transform of the option price. Numerical inversion of this Laplace transform is however a delicate task, see e.g. \cite{eydeland1995domino}, \cite{fu1999pricing}, \cite{shaw2002pricing}. \cite{carverhill1990valuing} relate the density of the discrete arithmetic average to an iterative convolution of densities, which is approximated numerically through the Fast Fourier Transform algorithm. Later extensions and improvements of the convolution approach include \cite{benhamou2002fast}, \cite{fusai2008pricing}, \cite{vcerny2011improved}, and \cite{fusai2011pricing}.
\cite{dufresne2000laguerre} derives a series representation using Laguerre orthogonal polynomials.  \cite{linetsky2004spectral} derives a series representation using spectral expansions involving Whittaker functions.

The approach taken in this paper is closely related to \cite{dufresne2000laguerre} in the sense that both are based on orthogonal polynomial expansions. The Laguerre series expansion can be shown to diverge when directly expanding the density of the average price, which is related to the fact that the tails of the average price distribution are heavier than those of the Gamma distribution. As a workaround, \cite{dufresne2000laguerre} proposes to work with the reciprocal of the average, for which the Laguerre series does converge. The main downside of this approach is that the moments of the reciprocal average are not available in closed form and need to be calculated through numerical integration, which introduces a high computational cost and additional numerical errors. \cite{asmussen2016orthonormal} use a different workaround and expand an exponentially tilted transformation of the density of a sum of log-normal random variables using a Laguerre series. They show that the exponential tilting transformation guarantees the expansion to converge. However, a similar problem as in \cite{dufresne2000laguerre} arises: the moments of the exponentially tilted density are not available in closed form and have to be computed numerically. In contrast, our approach is fully explicit and does not involve any numerical integration, which makes it very fast.

Truncating our series after only one term is equivalent to pricing the option under the assumption that the average price is log-normally distributed. The remaining terms in the series can therefore be thought of as corrections to the log-normal distribution. This has a very similar flavour to approaches using an Edgeworth expansion around the log-normal distribution (cfr. \cite{jarrow1982approximate} and \cite{turnbull1991quick}). The key difference with our approach is that the Edgeworth expansion can easily diverge because it lacks a proper theoretical framework. In contrast, the series we present in this paper is guaranteed to converge, possibly with a small asymptotic bias. A thorough study of the approximation error reveals that the asymptotic bias is positively related to the volatility of the stock price process and the option expiry. We use the integration by parts formula from Malliavin calculus to derive an upper bound on the approximation error.

The remaining of this paper is structured as follows. Section \ref{section:distribution_average} casts the problem and derives useful properties about the distribution of the arithmetic average. Section \ref{section:pol_exp} describes the density expansion used to approximate the option price. In Section \ref{section:error_bounds} we investigate the approximation error. Section \ref{section:numerical_examples} illustrates the method with numerical examples. Section \ref{section:conclusion} concludes. All proofs can be found in Appendix \ref{appendix:proofs}.

\section{The distribution of the arithmetic average} \label{section:distribution_average}
We fix a stochastic basis $(\Omega,\Fcal,(\Fcal_t)_{t\ge 0},\Q)$ satisfying the usual conditions and let $\Q$ be the risk-neutral probability measure. We consider the Black-Scholes setup where the underlying stock price $S_t$ follows a geometric Brownian motion:
\begin{align*}
\dd S_t=r S_t\,\dd t+\sigma S_t\,\dd B_t,
\end{align*}
where $r\in\R$ is the short-rate, $\sigma>0$ the volatility of the asset, and $B_t$ a standard Brownian motion. For notational convenience we assume $S_0=1$, which is without loss of generality. We define the average price process as
\begin{align*}
A_t=\frac{1}{t}\int^t_0S_u\,\dd u,\quad t> 0.
\end{align*}
The price at time 0 of an Asian option with continuous arithmetic averaging, strike $K>0$, and expiry $T>0$ is given by
\begin{align*}
\pi&=\e^{-rT}\E\left[\left(A_T-K\right)^+\right].
\end{align*}
The option price can not be computed explicitly since we do not know the distribution of $A_T$. However, we can derive useful results about its distribution.

We start by computing all the moments of $A_T$. Using the time-reversal property of a Brownian motion, we have the following identity in law (cfr.\ \cite{dufresne1990distribution}, \cite{carmona1997distribution}, \cite{donati2001certain}, \cite{linetsky2004spectral}):
\begin{lemma}
The random variable $TA_T$ has the same distribution as the solution at time $T$ of the following SDE
\begin{align}
\dd X_t=(rX_t+1)\,\dd t+\sigma X_t\,\dd B_t,\quad X_0=0.
\label{eq:garch_diffusion}
\end{align}
\label{lemma:identity_law}
\end{lemma}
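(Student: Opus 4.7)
The plan is to show the equality in law by representing both $TA_T$ and $X_T$ in the same ``exponential functional'' form and then invoking the time reversal of Brownian motion on one of them.

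First, I would solve the linear SDE \eqref{eq:garch_diffusion} explicitly via variation of constants. Let $Z_t=\exp((r-\sigma^2/2)t+\sigma B_t)$, which solves $\dd Z_t=rZ_t\,\dd t+\sigma Z_t\,\dd B_t$ with $Z_0=1$, i.e.\ $Z_t=S_t$. A direct application of It\^o's formula shows that $\dd(Z_t^{-1}X_t)=Z_t^{-1}\dd t$, the martingale parts and the $rX_t$ drift cancelling against the contributions from $\dd Z_t^{-1}$ and the covariation $\dd\langle Z^{-1},X\rangle_t$. Integrating and using $X_0=0$ yields the closed form
\begin{align*}
X_T = S_T\int_0^T S_u^{-1}\,\dd u = S_T\int_0^T \exp\!\bigl(-(r-\sigma^2/2)u-\sigma B_u\bigr)\,\dd u.
\end{align*}

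Next I would rewrite $TA_T=\int_0^T S_u\,\dd u$ in the same form using time reversal. Define $\tilde B_u := B_T-B_{T-u}$ for $u\in[0,T]$; this is again a standard Brownian motion on $[0,T]$. Substituting $v=T-u$ in $TA_T=\int_0^T\exp((r-\sigma^2/2)u+\sigma B_u)\,\dd u$ and using $B_{T-v}=B_T-\tilde B_v$ (so in particular $\tilde B_T=B_T$) gives
\begin{align*}
TA_T
= \exp\!\bigl((r-\sigma^2/2)T+\sigma\tilde B_T\bigr)\int_0^T \exp\!\bigl(-(r-\sigma^2/2)v-\sigma\tilde B_v\bigr)\,\dd v.
\end{align*}
The right-hand side has exactly the same functional form as $X_T$ but with the Brownian motion $\tilde B$ in place of $B$. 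Since $\tilde B$ and $B$ have the same law on $[0,T]$, the two functionals have the same distribution, so $TA_T\stackrel{d}{=}X_T$.

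The main obstacle, and the conceptual heart of the argument, is the time-reversal step: one must be careful that the factor $S_T$ coming out front actually involves the \emph{same} Brownian motion $\tilde B$ that appears inside the integral, so that after relabelling $\tilde B\rightsquigarrow B$ the joint law on the interval $[0,T]$ is preserved. Once this is handled, the remainder is routine It\^o calculus and a change of variables.
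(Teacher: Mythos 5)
Your proof is correct and follows essentially the same route as the paper: the time-reversal substitution $\tilde B_u=B_T-B_{T-u}$ gives $TA_T\overset{\mathrm{law}}{=}S_T\int_0^T S_u^{-1}\,\dd u$, and this functional is identified with $X_T$. The only cosmetic difference is that you solve the SDE by variation of constants to reach that representation, whereas the paper applies It\^o's lemma to $S_t\int_0^t S_u^{-1}\,\dd u$ and verifies it satisfies the SDE.
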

Since the SDE in \eqref{eq:garch_diffusion} defines a polynomial diffusion (see e.g., \cite{filipovic2016polynomial}), we can compute all the moments of its solution at a given time in closed form. By the identity in law of Lemma \ref{lemma:identity_law} we therefore also have all of the moments of $A_T$ in closed form:
\begin{proposition}\label{prop:moments}
If we denote by $H_n(x)=(1,x,\ldots,x^n)^\top$, $n\in\N$, then we have
\begin{align*}
\E\left[H_n(A_T)\right]=\e^{G_nT}H_n(0), 
\end{align*}
where $G_n\in\R^{(n+1)\times(n+1)}$ is the following lower bidiagonal matrix
\begin{align}
G_n=
\begin{pmatrix}
0&\\
\frac{1}{T}&r&\\
&\ddots&\ddots&\\
&&\frac{n}{T}&(nr+\frac{1}{2}n(n-1)\sigma^2)
\end{pmatrix}.
\label{eq:generator_matrix}
\end{align}
\end{proposition}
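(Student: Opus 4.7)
My plan is to reduce the problem via Lemma \ref{lemma:identity_law} to computing moments of a scaled diffusion, and then exploit the polynomial structure of the coefficients using It\^o's formula.

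First, by Lemma \ref{lemma:identity_law}, $A_T$ has the same law as $Y_T:=X_T/T$, and the scaling property of the SDE in \eqref{eq:garch_diffusion} shows that $Y_t$ satisfies
\begin{align*}
\dd Y_t=\left(rY_t+\tfrac{1}{T}\right)\dd t+\sigma Y_t\,\dd B_t,\quad Y_0=0.
\end{align*}
So it suffices to compute $\E[H_n(Y_T)]$.

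Next, I would apply It\^o's formula to each monomial $Y_t^k$ for $k=0,1,\ldots,n$. A short calculation gives
\begin{align*}
\dd Y_t^k=\left[\tfrac{k}{T}Y_t^{k-1}+\left(kr+\tfrac{1}{2}k(k-1)\sigma^2\right)Y_t^k\right]\dd t+k\sigma Y_t^k\,\dd B_t,
\end{align*}
and reading off the coefficients shows that the drift vector of $H_n(Y_t)$ is exactly $G_n H_n(Y_t)$ with $G_n$ as in \eqref{eq:generator_matrix}. Taking expectations, and assuming for the moment that the stochastic integrals are genuine martingales, I obtain the linear ODE
\begin{align*}
\tfrac{\dd}{\dd t}\E[H_n(Y_t)]=G_n\,\E[H_n(Y_t)],\qquad \E[H_n(Y_0)]=H_n(0)=(1,0,\ldots,0)^\top.
\end{align*}
The unique solution evaluated at $t=T$ is $\e^{G_nT}H_n(0)$, which is the stated formula.

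The main technical point that needs care is justifying that the It\^o integrals $\int_0^t k\sigma Y_s^k\,\dd B_s$ are true martingales, i.e., that $\E[Y_s^{2k}]<\infty$ uniformly for $s\in[0,T]$ and $k\le n$. Since $X_t$ solves a linear SDE with linear-growth coefficients, all its polynomial moments are finite on $[0,T]$. I would establish this by the standard localization argument: introduce the stopping times $\tau_N=\inf\{t:|X_t|\ge N\}$, write the It\^o identity up to $t\wedge\tau_N$ where the stochastic integral is a martingale, take expectations, bound the resulting integral equation by Gronwall's inequality to obtain moment bounds uniform in $N$, and then let $N\to\infty$ using monotone convergence. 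Alternatively, one can simply invoke the polynomial diffusion framework of \cite{filipovic2016polynomial} cited in the text, which treats precisely this situation and directly yields the matrix exponential representation.
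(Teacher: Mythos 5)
Your proof is correct and follows essentially the same route as the paper: both apply the generator (It\^o's formula) to the monomials, obtain the linear ODE $\frac{\dd}{\dd t}\E[H_n(\cdot)]=G_n\E[H_n(\cdot)]$, and solve it via the matrix exponential. The only cosmetic difference is that you rescale the process to $Y_t=X_t/T$ up front, whereas the paper works with $X_t$ and conjugates the generator matrix by $\mathrm{diag}(H_n(T^{-1}))$ afterwards; these are equivalent.
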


Given that the matrix exponential is a standard built-in function in most scientific computing packages, the above moment formula is very easy to implement. There also exist efficient numerical methods to directly compute the action of the matrix exponential, see e.g. \cite{al2011computing} and \cite{caliari2014comparison}.
An equivalent, but more cumbersome to implement, representation of the moments can be found in \cite{geman1993bessel}. 

The following proposition shows that $A_T$ admits a smooth density function $g(x)$ whose tails are dominated by the tails of a log-normal density function:
\begin{proposition}\label{prop:upper_tail}
\leavevmode
\begin{enumerate}
\item The random variable $A_T$ admits an infinitely differentiable density function $g(x)$.
\item The density function $g(x)$ has the following asymptotic properties:
\begin{align*}
g(x)=
\begin{cases}
\Ocal \left( \exp\left\{-\dfrac{3}{2}\dfrac{\log(x)^2}{\sigma^2T}\right\} \right)&\text{for}\quad x\to 0,\\[10pt]
\Ocal \left( \exp\left\{-\dfrac{1}{2}\dfrac{\log(x)^2}{\sigma^2T}\right\} \right)& \text{for}\quad x\to \infty.
\end{cases}
%
\end{align*}
\end{enumerate}
\end{proposition}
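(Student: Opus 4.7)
The plan combines H\"ormander's bracket condition for the smoothness statement with explicit stochastic comparisons of $A_T$ against simpler Brownian functionals for the tail asymptotics.

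For Part 1, I would apply H\"ormander's theorem to the SDE \eqref{eq:garch_diffusion} of Lemma \ref{lemma:identity_law}. Writing the drift vector field as $V_0(x) = (rx+1)\partial_x$ and the diffusion vector field as $V_1(x) = \sigma x\partial_x$, one checks $[V_0, V_1] = \sigma\partial_x$, which is nowhere vanishing on $\R$ --- in particular at the initial state $x=0$, where $V_1$ itself vanishes but the bracket does not. Hence $X_T$, and so $A_T$, admits a $C^\infty$ density. Equivalently, one can verify the Malliavin non-degeneracy of $A_T$ directly: $D_s A_T = (\sigma/T)\int_s^T S_u\,\dd u$, whence $\|DA_T\|_H^2 \ge (\sigma^2 T/8)\inf_{u\in[T/2,T]}S_u^2$, and the right-hand side has negative moments of every order since $\inf_u S_u$ is the exponential of the infimum of a drifted Brownian motion on $[T/2,T]$.

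For Part 2 the two tails are handled by different pathwise comparisons. For the lower tail, Jensen's inequality on the concave function $\log$ gives
\begin{align*}
\log A_T \ge \frac{1}{T}\int_0^T \log S_u\,\dd u = \frac{\sigma}{T}\int_0^T B_u\,\dd u + \frac{r-\sigma^2/2}{2}\,T,
\end{align*}
whose right-hand side is Gaussian with variance $\sigma^2 T/3$. Thus $\Pa(A_T\le x)$ is dominated by a Gaussian tail whose Mills-ratio asymptotics at $\log x\to -\infty$ produce the exponent $-\tfrac{3}{2}\log(x)^2/(\sigma^2 T)$, matching the claimed $3/2$ factor. For the upper tail, the pathwise bound $A_T \le \sup_{0\le u\le T} S_u = \exp(M_T)$, with $M_T$ the running maximum of the drifted Brownian motion $\sigma B_u+(r-\sigma^2/2)u$, combined with the explicit reflection-principle distribution of $M_T$, yields $\Pa(A_T > x) \le C\exp(-\tfrac{1}{2}\log(x)^2/(\sigma^2 T))$ up to polynomial corrections as $x\to\infty$.

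The final step is to upgrade these tail probability bounds into the stated pointwise bounds on $g$. I would use the Malliavin integration-by-parts formula
\begin{align*}
g(x) = \E\bigl[\mathbf{1}_{\{A_T > x\}}\,\delta\bigl(DA_T/\|DA_T\|_H^2\bigr)\bigr]
\end{align*}
(and its analogue with $\mathbf{1}_{\{A_T\le x\}}$ for the lower tail), and then H\"older's inequality with conjugate exponents $p,q$ to obtain $g(x) \le \|\delta(DA_T/\|DA_T\|_H^2)\|_p\cdot\Pa(A_T > x)^{1/q}$. Since Part 1 shows the Malliavin weight lies in every $L^p$, sending $q\downarrow 1$ transfers the Gaussian exponents into $g$, with the polynomial slack absorbed by the $\Ocal$-notation. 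I expect the main obstacle to be precisely this transfer: H\"older costs a factor $1/q > 1$ in the exponent, so the sharp constants $\tfrac{1}{2}$ and $\tfrac{3}{2}$ are only recovered in the limit, and one must carefully argue that the resulting polynomial prefactors do not spoil the leading-order Gaussian rate. A subsidiary delicate point is the non-degeneracy of $\|DA_T\|_H$ near $X_0=0$, where the diffusion coefficient of \eqref{eq:garch_diffusion} vanishes; this is saved by the deterministic drift $+1$, which forces $X_t$ off $0$ immediately.
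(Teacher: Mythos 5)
Part 1 of your proposal is essentially the paper's argument: the bracket condition $[V_0,V_1]=\sigma\partial_x\neq 0$ at $X_0=0$ is exactly the H\"ormander check the paper performs (in the form $a'(0)b(0)=\sigma\neq 0$), and your tail-probability estimates in Part 2 --- $A_T\le\sup_{0\le u\le T}S_u$ with the reflection principle for the upper tail, Jensen/geometric-average with variance $\sigma^2T/3$ for the lower tail --- are precisely the bounds the paper derives. The gap is in your final transfer step from tail probabilities to pointwise bounds on $g$. H\"older applied to $g(x)=\E[\mathbf{1}_{\{A_T>x\}}\,\delta(DA_T/\lVert DA_T\rVert_H^2)]$ gives $g(x)\le C_p\,\Pa(A_T>x)^{1/q}$, and raising a Gaussian-in-$\log x$ tail to the power $1/q<1$ multiplies the exponent by $1/q$: you obtain $\exp\{-\frac{1}{2q}\log(x)^2/(\sigma^2T)\}$, not $\exp\{-\frac{1}{2}\log(x)^2/(\sigma^2T)\}$. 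This is not a ``polynomial slack'' absorbable by the $\Ocal$ --- the ratio of the two bounds is $\exp\{(1-\frac1q)\log(x)^2/(2\sigma^2T)\}$, which grows faster than any polynomial --- and you cannot send $q\downarrow 1$ for fixed $x$ because the dual norm $C_p=\lVert\delta(\cdot)\rVert_p$ is not controlled as $p\to\infty$ (the Malliavin weight is unbounded). So your argument proves the proposition only with the constants $\frac12$ and $\frac32$ replaced by $\frac{1}{2q}$ and $\frac{3}{2q}$ for each $q>1$, which is strictly weaker than the statement. (It would, incidentally, still suffice for the only downstream use, Proposition \ref{prop:likelihood_ratio}, since the hypothesis $\nu^2>\frac12\sigma^2T$ is a strict inequality; but it does not prove the proposition as stated.)

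The paper avoids this loss by never passing through an $L^p$ bound on the weight: it compares the survival function $\Pa(A_T\ge x)$ directly with the integral $\int_x^\infty\e^{-\log(y)^2/(2\sigma^2T)}\,\dd y$ and invokes l'H\^opital's rule to turn the ratio of survival functions into the ratio of their derivatives, i.e.\ into $g(x)/\e^{-\log(x)^2/(2\sigma^2T)}$, preserving the exponent exactly. If you want to keep your Malliavin route, you would need either an a priori bound showing that $g$ is eventually monotone (so that a tail bound can be differentiated), or a localization of the weight with genuinely log-normal tails --- and even then a polynomial prefactor survives, which the stated $\Ocal$ does not tolerate. As written, the transfer step is the missing idea, and the obstacle you flag yourself is real rather than technical.
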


\section{Polynomial expansion}\label{section:pol_exp}
Following a similar structure as \cite{ackerer2017jacobi} and \cite{ackerer2017option}, we use in this section the density expansion approach described by \cite{filipovic2013density} to approximate the Asian option price. 
Define the weighted Hilbert space $L^2_w$ as the set of measurable functions $f$ on $\R$ with finite $L^2_w$-norm defined as
\begin{align}
\lVert f \rVert ^2_w= \int_0^\infty f(x)^2w(x)\,\dd x,\quad w(x)=\frac{1}{\sqrt{2\pi}\nu x}\exp\left\{-\frac{(\log(x)-\mu)^2}{2\nu^2}\right\},\label{eq:hilbert_space}
\end{align}
for some constants $\mu\in\R$,  $\nu>0$. The weight function $w$ is the density function of a log-normal distribution. The corresponding scalar product between two functions $f,h\in L^2_w$ is defined as
\begin{align*}
\langle f,h\rangle_w=\int_0^\infty f(x)h(x)w(x)\,\dd x.
\end{align*} 

Since the measures associated with the densities $g$ and $w$ are equivalent, we can define the likelihood ratio function $\ell$ such that
\begin{align*}
g(x)=\ell(x)w(x), \quad x\in (0,\infty).
\end{align*}
Using Proposition \ref{prop:upper_tail} we now have the following result:
\begin{proposition}\label{prop:likelihood_ratio}
If $\nu^2> \frac{1}{2} \sigma^2 T$, then $\ell\in L^2_w$, i.e.
\[
\int_{0}^\infty \left(\frac{g(x)}{w(x)}\right)^2 w(x)\,\dd x<\infty.
\]
\end{proposition}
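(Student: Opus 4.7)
The plan is to rewrite the integral as $\int_0^\infty g(x)^2/w(x)\,\dd x$ and show its finiteness by splitting $(0,\infty)$ into three regions: a neighborhood of $0$, a compact middle piece, and a neighborhood of $\infty$. On the compact middle piece, $g$ is continuous (hence bounded) by Proposition \ref{prop:upper_tail}(1), and $w$ is bounded away from zero, so that piece is trivially finite. The work is therefore in the two tails, where I would plug in the asymptotic bounds from Proposition \ref{prop:upper_tail}(2) and check that the resulting Gaussian-like integral in $\log x$ is convergent exactly under the hypothesis $\nu^2>\tfrac{1}{2}\sigma^2T$.

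Concretely, writing
\begin{align*}
\frac{g(x)^2}{w(x)}=\sqrt{2\pi}\,\nu\,x\cdot g(x)^2 \exp\!\left\{\frac{(\log x-\mu)^2}{2\nu^2}\right\},
\end{align*}
I would substitute $y=\log x$, $\dd x=e^y\dd y$, so that the integrand in $y$ becomes $\exp\{Q(y)\}$ times polynomial factors, where $Q(y)$ is a quadratic in $y$ whose leading coefficient is determined by the competition between the tail bound on $g$ and the weight $w$. For the upper tail ($y\to\infty$), Proposition \ref{prop:upper_tail}(2) gives
\begin{align*}
x\,g(x)^2\,e^{(\log x-\mu)^2/(2\nu^2)}\,e^y = \Ocal\!\left(\exp\!\left\{y^2\!\left(\frac{1}{2\nu^2}-\frac{1}{\sigma^2T}\right)+\text{(linear in }y\text{)}\right\}\right),
\end{align*}
and the condition $\nu^2>\tfrac{1}{2}\sigma^2T$ makes the coefficient of $y^2$ strictly negative, giving integrability at $+\infty$. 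For the lower tail ($y\to-\infty$) the same computation yields a coefficient of $y^2$ equal to $\tfrac{1}{2\nu^2}-\tfrac{3}{\sigma^2T}$, which is again negative under the (weaker) condition $\nu^2>\tfrac{1}{6}\sigma^2T$, automatically satisfied.

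The main (only) obstacle is the upper-tail calculation, since that is where the assumption on $\nu$ is tight: the Gaussian decay of $g$ at infinity is essentially log-normal with variance $\sigma^2T$, so one needs the reference log-normal $w$ to be \emph{strictly heavier}-tailed in the log scale, i.e.\ $\nu^2>\tfrac{1}{2}\sigma^2T$, in order for $g/w$ to be square integrable against $w$. Once the tail estimates are assembled, the result follows by adding the three contributions.
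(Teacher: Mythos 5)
Your proposal is correct and follows essentially the same route as the paper: split $(0,\infty)$ into two tails and a compact middle piece, use continuity of $g^2/w$ on the compact part, and compare the Gaussian-in-$\log x$ decay rates from Proposition \ref{prop:upper_tail} against the weight $w$, with the hypothesis $\nu^2>\tfrac{1}{2}\sigma^2T$ being exactly what makes the quadratic coefficient negative in the upper tail. Your treatment is in fact slightly more explicit than the paper's (which uses the weaker exponent for both tails and contains a couple of typographical slips such as $2\nu$ for $2\nu^2$), but the argument is the same.
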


 Denote by $\mathrm{Pol}(\R)$ the set of polynomials on $\R$ and by $\mathrm{Pol}_N(\R)$ the subset of polynomials on $\R$ of degree at most $N\in\N$. Since the log-normal distribution has finite moments of any degree, we have $\mathrm{Pol}_N(\R)\subset L^2_w$ for all $N\in\N$. 
Let $b_0,b_1,\ldots, b_N$ form an orthonormal polynomial basis for $\mathrm{Pol}_N(\R)$. Such a basis can, for example, be constructed numerically from the monomial basis using a Cholesky decomposition. Indeed, define the Hankel moment matrix $M=(M_{ij})_{0\le i,j\le N}$ as
\begin{align}
M_{ij}=\langle x^i,x^j\rangle_w =\e^{\mu (i+j)+\frac{1}{2}(i+j)^2\nu^2},\quad  i,j=0,\ldots,N,\label{eq:hankel}
\end{align}
which is positive definite by construction. If we denote by $M=LL^\top$ the unique Cholesky decomposition of $M$, then 
\begin{align*}
(b_0(x),\ldots ,b_N(x))^\top
= L^{-1} H_N(x),
\end{align*}
forms an orthonormal polynomial basis for $\text{Pol}_N(\R)$. Alternative approaches to build an orthonormal basis are, for example, the three-term recurrence relation (see Lemma \ref{lemma:orthonormal_polynomials} for details) or the analytical expressions for the orthonormal polynomials derived in Theorem 1.1 of \cite{asmussen2016orthonormal}.

\begin{remark}
The matrix $M$ defined in \eqref{eq:hankel} can in practice be non-positive definite due to rounding errors. This problem becomes increasingly important for large $N$ and/or large $\nu$ because the elements in $M$ grow very fast. Similarly, the moments of $A_T$ can also grow very large, which causes rounding errors in finite precision arithmetic. In Appendix \ref{appendix:scaling} we describe a convenient scaling technique that solves these problems in many cases.
\end{remark}

Define the discounted payoff function $F(x)=\e^{-r T}(x-K)^+$. Since $F(x)\le \e^{-rT} x$ for all $x\ge 0$, we immediately have that $F\in L^2_w$. Denote by $\overline{\mathrm{Pol}(\R)}$ the closure of $\mathrm{Pol}(\R)$ in $L^2_w$. We define the projected option price $\bar{\pi}=\E[\bar{F}(A_T)]$, where $\bar{F}$ denotes the orthogonal projection of $F$ onto $\overline{\mathrm{Pol}(\R)}$ in $L^2_w$. Elementary functional analysis gives
\begin{align}
\bar{\pi}&=\langle \bar{F},\ell\rangle_w=\sum_{n\ge 0}f_n\ell_n,\label{eq:asymptotic_series}
\end{align}
where we define the likelihood coefficients $\ell_n=\langle \ell,b_n\rangle_w$ and payoff coefficients $f_n=\langle F,b_n\rangle_w$. Truncating the series in \eqref{eq:asymptotic_series} after a finite number of terms finally gives the following approximation for the Asian option price:
\begin{align}
\pi^{(N)}=\sum_{n=0}^N f_n\ell_n,\quad N\in\N. \label{eq:truncated_series}
\end{align}

The likelihood coefficients are available in closed form using the moments of $A_T$ in Proposition \ref{prop:moments}:
\begin{equation*}
(\ell_0,\ldots,\ell_N)^\top=L^{-1}\e^{G_N T}H_N(0).
\end{equation*}
The payoff coefficients can also be derived explicitly, as shown in the following proposition.
\begin{proposition}\label{prop:payoff_coeff}
Let $\Phi$ be the standard normal cumulative distribution function. The payoff coefficients $f_0,\ldots,f_N$ are given by
\begin{align*}
(f_0,\ldots,f_N)^\top=e^{-rT}L^{-1}(\tilde{f}_0,\ldots,\tilde{f}_N)^\top,
\end{align*}
with 
\begin{align}
&\tilde{f}_n=\e^{\mu (n+1) +\frac{1}{2}(n+1)^2\nu^2}\Phi(d_{n+1})-K\e^{\mu n +\frac{1}{2}n^2\nu^2}\Phi(d_n),\quad n=0,\ldots,N,\nonumber\\
& d_n=\frac{\mu+\nu^2 n-\log(K)}{\nu},\quad n=0,\ldots,N+1.\label{eq:payoff_coeff_d}
\end{align}
\end{proposition}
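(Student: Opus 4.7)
The plan is to first reduce $f_n=\langle F,b_n\rangle_w$ to scalar products of $F$ against monomials, and then to evaluate the resulting integrals by a standard substitution that turns the log-normal weight into a Gaussian. Since the orthonormal basis was constructed as $(b_0,\ldots,b_N)^\top = L^{-1}H_N(x)$, linearity of $\langle\cdot,\cdot\rangle_w$ immediately gives
\begin{align*}
(f_0,\ldots,f_N)^\top = L^{-1}\bigl(\langle F,1\rangle_w,\ldots,\langle F,x^N\rangle_w\bigr)^\top,
\end{align*}
and pulling the discount factor out of $F(x)=\e^{-rT}(x-K)^+$ reduces the problem to computing, for each $n=0,\ldots,N$,
\begin{align*}
\tilde{f}_n = \int_K^\infty (x-K)x^n w(x)\,\dd x = \int_K^\infty x^{n+1}w(x)\,\dd x - K\int_K^\infty x^n w(x)\,\dd x.
\end{align*}

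Next I would evaluate each truncated log-normal moment $\int_K^\infty x^k w(x)\,\dd x$ by the change of variable $y=(\log x-\mu)/\nu$, which turns $w(x)\,\dd x$ into the standard Gaussian density $\varphi(y)\,\dd y$ and produces an additional factor $\e^{k(\mu+\nu y)}$ in the integrand. Completing the square in the exponent absorbs the linear term $k\nu y$ by shifting the integration variable by $k\nu$, and gives the Black-Scholes style identity
\begin{align*}
\int_K^\infty x^k w(x)\,\dd x = \e^{k\mu+\tfrac{1}{2}k^2\nu^2}\Phi(d_k), \qquad d_k=\frac{\mu+k\nu^2-\log K}{\nu}.
\end{align*}
Applying this with $k=n+1$ and $k=n$ and subtracting yields the claimed expression for $\tilde{f}_n$, and composing with the prefactor $\e^{-rT}L^{-1}$ completes the proof.

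The derivation is essentially routine and I do not anticipate a real obstacle; the only bookkeeping point is that $L^{-1}$ is applied on the left to the vector of monomial scalar products, which is consistent with $L$ being lower triangular from the Cholesky factorization of the Hankel moment matrix $M$. The computation exactly parallels the usual log-normal tail-integral identity underlying the Black-Scholes call price, applied to each power $x^k$ rather than just $x$.
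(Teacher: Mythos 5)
Your proof is correct and follows essentially the same route as the paper's: reduce $f_n$ via the triangular change of basis $L^{-1}$ to truncated log-normal moments of the form $\int_K^\infty x^k w(x)\,\dd x$, then evaluate these by passing to log coordinates and completing the square to obtain $\e^{k\mu+\frac{1}{2}k^2\nu^2}\Phi(d_k)$. The only difference is cosmetic (you substitute to the standard Gaussian before completing the square, the paper completes the square in the log variable directly), so there is nothing to add.
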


Equivalently, we could also derive the approximation \eqref{eq:truncated_series} by projecting $\ell$, instead of $F$, on the set of polynomials. This leads to the interpretation of \eqref{eq:truncated_series} as the option price one obtains when approximating the true density $g(x)$ by
\begin{align}
g^{(N)}(x)=w(x)\sum_{n=0}^N\ell_n b_n(x).
\label{eq:density_approx}
\end{align}
The function $g^{(N)}(x)$ integrates to one by construction
\begin{align*}
\int_0^\infty g^{(N)}(x)\,\dd x=\sum_{n=0}^N\ell_n \langle b_n,b_0=1\rangle_w=\ell_0=1,
\end{align*}
where the last equality follows from the fact that $g(x)$ integrates to one.
However, it is not a true probability density function since it is not guaranteed to be non-negative.

\section{Approximation error}\label{section:error_bounds}
The error introduced by the the approximation in \eqref{eq:truncated_series} can be decomposed as
\begin{align*}
\pi-\pi^{(N)}=\left(\pi-\bar{\pi} \right)+\left(\bar{\pi}-\pi^{(N)}\right).
\end{align*}
The second term is guaranteed to converge to zero as $N\to\infty$. In order for the first term to vanish, we need $F\in\overline{\mathrm{Pol}(\R)}$ and/or $\ell\in\overline{\mathrm{Pol}(\R)}$. It is well known (see e.g., \cite{heyde2010property}) that the log-normal distribution is not determined by its moments. As a consequence, the set of polynomials does not lie dense in $L^2_w$: $\overline{\mathrm{Pol}(\R)}\subsetneq L^2_w$. Hence, the fact that $F,\ell\in L^2_w$ is not sufficient to guarantee that the first term in the error decomposition is zero.  One of the goals of this paper is to quantify the importance of the first error term. In this section we therefore investigate the error associated with projecting $F$ and $\ell$ on the set of polynomials.

The $L^2_w$-distances of $F$ and $\ell$ to their respective orthogonal projections on $\text{Pol}_N(\R)$ are given by
\begin{align*}
\epsilon_N^F&:=\left\lVert F- \sum_{n=0}^Nb_n f_n \right \rVert_w^2=\lVert F\rVert^2_w- \sum_{n=0}^N f_n^2,\\
 \epsilon_N^\ell&:=\left\lVert\ell- \sum_{n=0}^Nb_n \ell_n \right \rVert_w^2=\lVert \ell\rVert^2_w- \sum_{n=0}^N \ell_n^2.
\end{align*}
The $L^2_w$-norm of the payoff function $F$ can be derived explicitly following very similar steps as in the proof of Proposition \ref{prop:payoff_coeff}:
\begin{align*}
\lVert F\rVert^2_w=\e^{-2rT}\left(\e^{2\mu+2\nu^2}\Phi(d_2)-2K\e^{\mu+0.5\nu^2}\Phi(d_1)+K^2\Phi(d_0)\right),
\end{align*}
where $d_0,d_1$, and $d_2$ are defined in \eqref{eq:payoff_coeff_d}. Hence, we can explicitly evaluate $\epsilon^F_N$.

The computation of $\epsilon_N^\ell$ is more difficult since $\lVert \ell \rVert^2_w$ depends on the unknown density $g(x)$. The following lemma uses the integration by parts formula from Malliavin calculus to derive a representation of $g(x)$ in terms of an expectation, which can be evaluated by Monte-Carlo simulation:
\begin{lemma}
For any $x\in\R$ we have
\begin{align}
g(x)=\E\left[ \left(1_{\{A_T\ge x\}} -c(x)\right) \frac{2}{\sigma^2}\left( \frac{S_T-S_0}{T A_T^2}+\frac{\sigma^2-r}{A_T}\right)\right],\label{eq:density_malliavin}
\end{align}
where $c$ is any deterministic finite-valued function.
\label{lemma:malliavin}
\end{lemma}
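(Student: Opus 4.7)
The plan is to apply the Malliavin density formula $g(x) = \E[1_{\{A_T \ge x\}}\,\delta(u)]$, where $\delta$ denotes the Skorokhod integral and $u$ is any process in its domain satisfying $\int_0^T D_t A_T \cdot u_t\,\dd t = 1$. Using $\dd S_t = rS_t\,\dd t+\sigma S_t\,\dd B_t$ one has $D_t S_u = \sigma S_u\,1_{\{t\le u\}}$, and hence
\[
D_t A_T = \frac{\sigma}{T}\int_t^T S_u\,\dd u,\quad 0\le t\le T,
\]
which is strictly positive a.s., so $A_T$ is Malliavin-nondegenerate and the density formula applies.

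The critical step is choosing the weight. I would take the anticipating process
\[
u_t = \frac{2 S_t}{\sigma T A_T^2},\qquad 0\le t\le T.
\]
Plugging into the normalization condition and using the symmetry identity $\int_0^T\!\!\int_0^T S_t S_u\,\dd u\,\dd t = (TA_T)^2$ gives
\[
\int_0^T D_t A_T\cdot u_t\,\dd t \;=\; \frac{2}{T^2 A_T^2}\int_0^T S_t\Big(\int_t^T S_u\,\dd u\Big)\dd t \;=\; \frac{2}{T^2 A_T^2}\cdot\frac{(TA_T)^2}{2}\;=\;1.
\]

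Next I would compute $\delta(u)$ explicitly. Writing $u_t = Y\cdot S_t$ with $Y = \tfrac{2}{\sigma T A_T^2}$ an $\Fcal_T$-measurable scalar, the Malliavin product rule for Skorokhod integrals gives
\[
\delta(u) = Y\int_0^T S_t\,\dd B_t \;-\; \int_0^T S_t\,D_t Y\,\dd t,
\]
where the first integral on the right is an Itô integral since $S$ is adapted. The chain rule yields $D_t Y = -\tfrac{4}{\sigma T A_T^3} D_t A_T$, and the same Fubini/symmetry trick as above then gives $\int_0^T S_t\,D_t Y\,\dd t = -\tfrac{2}{A_T}$. Substituting the identity $\int_0^T S_t\,\dd B_t = (S_T - S_0 - rTA_T)/\sigma$, obtained by integrating the SDE for $S$, and collecting terms yields
\[
\delta(u) \;=\; \frac{2}{\sigma^2}\Big(\frac{S_T - S_0}{T A_T^2} + \frac{\sigma^2 - r}{A_T}\Big),
\]
which matches the expression in the statement. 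Finally, the freedom to subtract $c(x)$ follows from $\E[\delta(u)] = 0$, which in turn is the defining adjointness property $\E[\delta(u)] = \E[\langle D 1, u\rangle] = 0$; hence $\E[c(x)\,\delta(u)] = c(x)\cdot 0 = 0$ for any deterministic $c(x)$.

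The main obstacle is the technical verification that $Y \in \mathbb{D}^{1,2}$ and $u \in \mathrm{Dom}(\delta)$, which amounts to showing that $A_T$ has sufficiently well-behaved negative moments. Via Lemma \ref{lemma:identity_law} this reduces to negative moment estimates for the solution $X_T$ of \eqref{eq:garch_diffusion}, where $X_T > 0$ almost surely for $T>0$; the required bounds can be obtained, for instance, by exploiting the lower bound $X_T \ge \int_0^T \exp\{(r-\tfrac12\sigma^2)(T-s)+\sigma(B_T-B_s)\}\,\dd s$ coming from the explicit solution and standard Gaussian tail estimates. Once this is in place, everything else in the proof is a direct computation.
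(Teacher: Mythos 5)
Your proposal is correct and follows essentially the same route as the paper's proof: the Malliavin integration-by-parts density formula with the weight $u_t = 2S_t/(\sigma T A_T^2)$ (which is exactly the paper's weight $S_t/\int_0^T (D_uA_T)S_u\,\dd u$ after applying the Fubini identity $\int_0^T S_t\int_t^T S_u\,\dd u\,\dd t = \tfrac{T^2}{2}A_T^2$), the same factorization of the $\Fcal_T$-measurable scalar out of the Skorohod integral, the same substitution $\int_0^T S_t\,\dd B_t = (S_T-S_0-rTA_T)/\sigma$, and the same zero-mean property of $\delta$ to justify subtracting $c(x)$. The only difference is that you flag the domain/integrability verification ($Y\in\mathbb{D}^{1,2}$, $u\in\mathrm{Dom}(\delta)$ via negative moments of $A_T$) more explicitly than the paper, which simply invokes the cited results.
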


\begin{remark}
The purpose of the function $c$ is to guarantee that the simulated $g(x)$ actually goes to zero for $x\to 0$. Indeed, if we set $c(x)\equiv 0$, then $g(0)$ can be different from zero due to the Monte-Carlo error, which can lead to numerical problems when evaluating $\ell(x)$. This can be avoided by, for example, using the indicator function $c(x)=1_{x\le \xi}$, for some $\xi>0$.
\end{remark}

As a direct consequence of \eqref{eq:density_malliavin} we get the following expression for the $L^2_w$-norm of the likelihood ratio.
\begin{corollary}\label{corollary:likelihood_ratio_MC}
The $L^2_w$-norm of $\ell$ is given by
\begin{align*}
\lVert \ell\rVert^2_w=
\E&\left[ \frac{(1_{\{A_T\ge \tilde{A}_T\}} -c(\tilde{A}_T)) \frac{2}{\sigma^2}\left( \frac{S_T-S_0}{T A_T^2}+\frac{\sigma^2-r}{A_T}\right)}{w(\tilde{A}_T)}\right],
\end{align*}
where the random variable $\tilde{A}_T$ is independent from all other random variables and has the same distribution as $A_T$.
\end{corollary}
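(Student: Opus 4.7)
The plan is to combine the definition of the $L^2_w$-norm of $\ell$ with the Malliavin representation of the density $g$ from Lemma \ref{lemma:malliavin}, using an independent copy of $A_T$ to turn the resulting double integral into an expectation.

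First I would expand the norm using the relation $g(x)=\ell(x)w(x)$:
\begin{align*}
\lVert \ell\rVert^2_w = \int_0^\infty \ell(x)^2 w(x)\,\dd x = \int_0^\infty \frac{g(x)^2}{w(x)}\,\dd x = \int_0^\infty \frac{g(x)}{w(x)}\,g(x)\,\dd x.
\end{align*}
Recognizing that $g$ is the density of $A_T$ and introducing an independent copy $\tilde{A}_T$ of $A_T$ with density $g$, the last integral can be written as
\begin{align*}
\lVert \ell \rVert^2_w = \E\!\left[\frac{g(\tilde{A}_T)}{w(\tilde{A}_T)}\right].
\end{align*}

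Next I would substitute the Malliavin formula from Lemma \ref{lemma:malliavin} for $g(x)$, evaluated at the (random) point $x=\tilde{A}_T$. Since $\tilde{A}_T$ is independent of the pair $(S_T,A_T)$ appearing inside that Malliavin expectation, conditioning on $\tilde{A}_T$ gives
\begin{align*}
g(\tilde{A}_T) = \E\!\left[\left(1_{\{A_T\ge \tilde{A}_T\}} - c(\tilde{A}_T)\right)\frac{2}{\sigma^2}\!\left(\frac{S_T-S_0}{T A_T^2}+\frac{\sigma^2-r}{A_T}\right)\,\Big|\,\tilde{A}_T\right],
\end{align*}
and dividing by $w(\tilde{A}_T)$ (which is $\tilde{A}_T$-measurable) and taking an outer expectation yields, via the tower property,
\begin{align*}
\lVert \ell \rVert^2_w = \E\!\left[\frac{(1_{\{A_T\ge \tilde{A}_T\}}-c(\tilde{A}_T))\,\frac{2}{\sigma^2}\!\left(\frac{S_T-S_0}{T A_T^2}+\frac{\sigma^2-r}{A_T}\right)}{w(\tilde{A}_T)}\right],
\end{align*}
which is exactly the claimed identity.

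The main technical point to verify is that all the interchanges of integration are legitimate, i.e.\ that the joint expectation is absolutely convergent. Under the standing assumption $\nu^2>\tfrac{1}{2}\sigma^2 T$, Proposition \ref{prop:likelihood_ratio} already guarantees $\lVert \ell\rVert^2_w<\infty$, so the tower-property manipulation above is justified a posteriori. The rest is bookkeeping: the independence of $\tilde{A}_T$ from $(S_T,A_T)$ and the fact that $c$ is deterministic ensure that conditioning on $\tilde{A}_T$ simply reproduces the Lemma \ref{lemma:malliavin} integrand with $x$ replaced by $\tilde{A}_T$, with no additional measurability issues.
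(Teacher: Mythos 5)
Your proof is correct and follows essentially the same route as the paper: write $\lVert \ell\rVert^2_w=\int_0^\infty \frac{g(x)}{w(x)}g(x)\,\dd x$, interpret the outer $g(x)\,\dd x$ as integration against an independent copy $\tilde{A}_T$, and substitute the Malliavin representation of $g$ from Lemma \ref{lemma:malliavin}. The paper states this in one line; your added remarks on the tower property and on integrability (justified a posteriori by Proposition \ref{prop:likelihood_ratio}) simply make explicit what the paper leaves implicit.
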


This allows us to get an estimate for $\epsilon_N^\ell$ by simulating the random vector $(S_T,A_T,\tilde{A}_T)$. In Appendix \ref{appendix:control_variate} we describe how to use the known density function of the geometric average as a powerful control variate to significantly reduce the variance in the Monte-Carlo simulation.

Using the Cauchy-Schwarz inequality we also have the following upper bound on the approximation error in terms of the projection errors $\epsilon_N^F$ and $\epsilon_K^\ell$.
\begin{proposition}\label{prop:error_bounds}
For any $N\ge 0$ we have
\begin{align}
|\pi-\pi^{(N)}|\le\sqrt{\epsilon_N^F\epsilon_N^\ell}\label{eq:error_bound}
\end{align}
\end{proposition}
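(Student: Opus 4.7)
The plan is to write both $\pi$ and $\pi^{(N)}$ as inner products in $L^2_w$ and exploit the orthogonality of the projection error with $\mathrm{Pol}_N(\R)$ to reduce the difference to a single inner product, at which point Cauchy-Schwarz delivers the bound immediately.

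First, I would observe that since $g(x)=\ell(x)w(x)$ and $F\in L^2_w$, the true price can be written as
\begin{align*}
\pi=\E[F(A_T)]=\int_0^\infty F(x)\ell(x)w(x)\,\dd x=\langle F,\ell\rangle_w,
\end{align*}
while the truncated approximation is $\pi^{(N)}=\sum_{n=0}^N f_n\ell_n=\langle F_N,\ell_N\rangle_w$, where $F_N=\sum_{n=0}^N b_n f_n$ and $\ell_N=\sum_{n=0}^N b_n\ell_n$ are the orthogonal projections of $F$ and $\ell$ onto $\mathrm{Pol}_N(\R)$ (this uses Parseval-type identities with respect to the orthonormal basis $b_0,\ldots,b_N$).

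Next, I would split $F=(F-F_N)+F_N$ and $\ell=(\ell-\ell_N)+\ell_N$ and expand the inner product $\langle F,\ell\rangle_w$ into four terms. The two cross terms $\langle F-F_N,\ell_N\rangle_w$ and $\langle F_N,\ell-\ell_N\rangle_w$ vanish because, by the defining property of orthogonal projection in a Hilbert space, $F-F_N$ and $\ell-\ell_N$ are orthogonal to every element of $\mathrm{Pol}_N(\R)$, in particular to $\ell_N$ and $F_N$ respectively. This collapses the identity to
\begin{align*}
\pi-\pi^{(N)}=\langle F,\ell\rangle_w-\langle F_N,\ell_N\rangle_w=\langle F-F_N,\ \ell-\ell_N\rangle_w.
\end{align*}
Applying the Cauchy-Schwarz inequality in $L^2_w$ and recognising $\|F-F_N\|_w^2=\epsilon_N^F$ and $\|\ell-\ell_N\|_w^2=\epsilon_N^\ell$ then yields \eqref{eq:error_bound}.

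There is essentially no obstacle here; the only subtlety worth flagging is that one should not try to route the argument through the projected price $\bar{\pi}$, because $F$ and $\ell$ need not belong to $\overline{\mathrm{Pol}(\R)}$ owing to the moment indeterminacy of the log-normal distribution. The virtue of the direct Hilbert-space manipulation above is that it bypasses $\bar\pi$ entirely, so the asymptotic bias $\pi-\bar\pi$ is silently absorbed into $\langle F-F_N,\ell-\ell_N\rangle_w$ and the bound holds for the genuine pricing error $\pi-\pi^{(N)}$.
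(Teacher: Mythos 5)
Your proof is correct and follows essentially the same route as the paper: write $\pi=\langle F,\ell\rangle_w$ and $\pi^{(N)}$ as the inner product of the degree-$N$ projections, use orthogonality to collapse the difference to $\langle F-F_N,\ell-\ell_N\rangle_w$, and apply Cauchy--Schwarz together with the Pythagorean identities for $\epsilon_N^F$ and $\epsilon_N^\ell$. The only difference is that you spell out the four-term expansion and vanishing cross terms that the paper leaves implicit, which is a welcome clarification rather than a deviation.
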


This upper bound will therefore be small if $F$ and/or $\ell$ are well approximated by a polynomial in $L^2_w$. Computing the upper bound involves a Monte-Carlo simulation to compute $\epsilon_N^\ell$, which makes it impractical to use as a decision rule for $N$. This bound should be seen as a more conservative estimate of the approximation error compared to direct simulation of the option price.

\section{Numerical examples}\label{section:numerical_examples}
In this section we compute Asian option prices using the series expansion in \eqref{eq:truncated_series}. The orthonormal basis is constructed using the scaling technique described in Appendix \ref{appendix:scaling}. We set  $\nu^2=\frac{1}{2}\sigma^2T+10^{-4}$ so that Proposition \ref{prop:likelihood_ratio} is satisfied and choose $\mu$ so that the first moment of $w$ matches the first moment of $A_T$. As a consequence, we always have 
\[\ell_1=\int_0^\infty b_1(x)g(x)\,\dd x=\langle b_0,b_1\rangle_w=0.\]

\begin{remark}
Choosing $\mu$ and $\nu$ so that the first two moments of $A_T$ are matched is typically not possible due to the restriction $\nu^2>\frac{1}{2}\sigma^2 T$ in Proposition \ref{prop:likelihood_ratio}. A similar problem arises in the Jacobi stochastic volatility model of \cite{ackerer2017jacobi}, where options are priced using a polynomial expansion with a normal density as weight function. \cite{ackerer2017option} address this problem by using a mixture of two normal densities as weight function. Specifying $w$ as a mixture of normal densities would not work  in our setting since in this case $\ell\notin L^2_w$ .\footnote{
Instead of approximating the distribution of $A_T$, it is tempting to approximate the distribution of $\log(A_T)$ and rewrite the discounted payoff function accordingly. In this case, one can show that specifying $w$ as a normal density function gives a series approximation that converges to the true price. The catch is that we do not know the moments of $\log(A_T)$, only those of $A_T$, and hence the terms in the series can  not be computed explicitly.} Instead, we can use a mixture of two log-normal densities:
\begin{align*}
w(x)=c w_1(x)+(1-c)w_2(x),
\end{align*}
where $c\in[0,1]$ is the mixture weight, and $w_1$ and $w_2$ are log-normal density functions with mean parameters $\mu_1,\mu_2\in\R$, and volatility parameters $\nu_1,\nu_2>0$, respectively. In order for Proposition \ref{prop:likelihood_ratio} to apply, it suffices to choose $\nu_1^2>\frac{1}{2}\sigma^2 T$. The remaining parameters can then be chosen freely and used for higher order moment matching. 
\end{remark}

We consider a set of seven parameterizations that has been used as a set of test cases in, among others, \cite{eydeland1995domino}, \cite{fu1999pricing}, \cite{dufresne2000laguerre}, and \cite{linetsky2004spectral}. The first columns of Table \ref{table:benchmark_cases} contain the parameter values of the seven cases. The cases are ordered in increasing size of $\tau=\sigma^2T$. Remark that $S_0\neq 1$ for all cases, however we normalize the initial stock price to one and scale the strike and option price accordingly. The columns LNS10, LNS15, LNS20 (Log-Normal Series) contain the option price approximations using \eqref{eq:truncated_series} for $N=10,15,20$, respectively. The columns LS (Laguerre Series) and EE (Eigenvalue Expansion) correspond to the series expansions of \cite{dufresne2000laguerre} and \cite{linetsky2004spectral}, respectively. The column VEC shows the prices produced by the PDE method of \cite{vecer2001new,vecer2002unified} using a grid with 400 space points and 200 time points.\footnote{This grid choice corresponds to the one used in \cite{vecer2001new}. By significantly increasing the number of space points in the grid, the PDE method can achieve the same accuracy as \cite{linetsky2004spectral}. However, doing so makes the method very slow.}  The last column contains the 95\% confidence intervals of a Monte-Carlo simulation using the geometric Asian option price as a control variate, cfr.\ \cite{kemna1990pricing}. We simulate $2\times 10^5$ price trajectories with a time step of $10^{-3}$ and approximate the continuous average with a discrete average. 

\begin{center}
[Table \ref{table:benchmark_cases} around here]
\end{center}

For the first three cases we find virtually identical prices as \cite{linetsky2004spectral}, which is one of the most accurate benchmarks available in the literature. Remarkably, our method does not face any problems with the very low volatility in case 1. Many other existing method have serious difficulty with this parameterization. Indeed, the series of \cite{dufresne2000laguerre} does not even come close to the true price, while \cite{linetsky2004spectral} requires 400 terms to obtain an accurate result. The price of \cite{vecer2001new,vecer2002unified} is close to the true price, but still outside of the 95\% Monte-Carlo confidence interval. Methods based on numerical inversion of the Laplace transform of \cite{geman1993bessel} also struggle with low volatility because they involve numerical integration of highly oscillating integrands (see e.g., \cite{fu1999pricing}). When using exact arithmetic for case 1, our series with 20+1 terms agrees with the 400 term series of \cite{linetsky2004spectral} to eight decimal places. When using double precision arithmetic, which was used for all numerical results in this section, the price agrees to four decimal places due to rounding errors. For cases 4 to 7, the LNS prices are slightly different from the EE benchmark. However, they are still very close and with the exception of case 4, they are all within the 95\% confidence interval of the Monte-Carlo simulation. 

\begin{center}
[Figure \ref{fig:benchmark_cases} around here]
\end{center}

Figure \ref{fig:benchmark_cases} plots the LNS price approximations for $N$ ranging from 0 to 20, together with the Monte-Carlo price and the corresponding 95\% confidence intervals as a benchmark.\footnote{Figure \ref{fig:benchmark_cases} and \ref{fig:density_approx} only show cases 1, 3, 5, and 7. The plots for the other cases look very similar and are available upon request.} We observe that the series converges very fast in all cases. In fact, truncating the series at $N=10$ would give almost identical results. In theory, $N$ can be chosen arbitrarily large, however in finite precision arithmetic it is inevitable that rounding errors start playing a role at some point. Remark that the prices for $N=0$ and $N=1$ are identical, which is a consequence of the fact that the auxiliary distribution matches the first moment of $A_T$. Figure \ref{fig:density_approx} plots the simulated true density $g(x)$ and the approximating densities $g^{(0)}(x)$, $g^{(4)}(x)$, and $g^{(20)}(x)$ defined in \eqref{eq:density_approx}. The true density was simulated using \eqref{eq:density_control_variate} in Appendix \ref{appendix:control_variate}, which is an extension of \eqref{eq:density_malliavin} using the density of the geometric average as a powerful control variate. Note that $g^{(0)}(x)=w(x)$, since $b_0(x)= \ell_0 =1$. We can see that the approximating densities approach the true density as we include more terms in the expansion. In Figure \ref{fig:density_approx_1} and \ref{fig:density_approx_3} the approximation $g^{(20)}(x)$ is virtually indistinguishable from the true density. However, in Figure \ref{fig:density_approx_5} and \ref{fig:density_approx_7} there remains a noticeable difference between $g(x)$ and $g^{(20)}(x)$. This is consistent with the pricing errors we observed earlier in Table \ref{table:benchmark_cases}.

\begin{center}
[Figure \ref{fig:density_approx} around here]
\end{center}

The above results indicate that for $\tau$ not too high, the LNS provides a very accurate approximation of the option price. This is not entirely surprising since $\tau$ determines the volatility parameter of the auxiliary log-normal density $w$, and hence how fast the tails of $w$ go to zero. Loosely speaking, when $\tau$ is small, projecting the payoff or likelihood ratio function on the set of polynomials in $L^2_w$ is almost like approximating a continuous function on a compact interval by polynomials. However, when $\tau$ becomes larger, the tails of $w$ decay slowly and it becomes increasingly difficult to approximate a function with a polynomial in $L^2_w$. In other words, for larger values of $\tau$, the moment indeterminacy of the log-normal distribution starts playing a more prominent role. A natural question is therefore whether this poses a problem for option pricing purposes. In Figure \ref{fig:range_vol_MC} we fix $T=1$ and plot for a range of values for $\sigma$ the squared relative error 
\[
SRE=\left(\frac{\hat{\pi}^{MC}-\pi^{(20)}}{\pi^{(20)}}\right)^2,
\]
where $\hat{\pi}^{MC}$ denotes the Monte-Carlo price estimate. The error starts to becomes noticeable around $\sigma=80\%$, where $\sqrt{SRE}\approx 0.5\%$. For higher values of $\sigma$ the error increases sharply. In Figure \ref{fig:range_vol_upper} we plot a more conservative estimate of the squared relative error using the upper bound in \eqref{eq:error_bound}. This plot shows that the upper bound is only significantly different from zero for $\sigma$ larger than approximately $70\%$. Figure \ref{fig:high_vol} gives a more detailed insight in the extreme case of $\sigma=100\%$. Although the LNS series converges relatively fast, it is clear from Figure \ref{fig:high_vol_MC} that it does not converge to the true price. The reason, as already mentioned before, is that the payoff and likelihood ratio functions are not accurately approximated by polynomials in the $L^2_w$-norm, as indicated by the projection errors in Figure \ref{fig:high_vol_payoffProjection} and \ref{fig:high_vol_likelihoodProjection}. We would obtain similar results by keeping $\sigma$ fixed and varying the maturity $T$, since the crucial parameter for the asymptotic pricing error is $\tau=\sigma^2T$. As a rule of thumb, we suggest to use the LNS method when $\tau\le 0.5$.

\begin{center}
[Figure \ref{fig:range_vol} and \ref{fig:high_vol} around here]
\end{center}

The main advantage of the method proposed in this paper is the ease of of its implementation and the computation speed. All terms in the series are fully explicit and involve only simple linear algebra operations. Table \ref{table:computation_times} shows the computation times of the LNS with $N\in\{10,25,20\}$, as well as the computation times of the benchmark methods.\footnote{For the LS, all symbolic calculations related to the moments of the reciprocal of the average have been pre-computed using Matlab's Symbolic Math Toolbox. We use 15+1 terms in the series, a higher number of terms leads to severe rounding problems in double precision arithmetic. For the EE, the integral representation (16) in \cite{linetsky2004spectral} has been implemented instead of the series representation (15). The implementation of the series representation involves partial derivatives of the Whittaker function with respect to its indices. These derivatives are not available in Matlab's Symbolic Math Toolbox and numerical finite difference approximations did not give accurate results. All numerical integrations are performed using Matlab's built-in function \texttt{integral}. For case 1, the numerical integration in the EE did not finish in a reasonable amount of time. For the VEC method, we use Prof.\ Jan Vecer's  Matlab implementation, which can be downloaded at \url{http://www.stat.columbia.edu/~vecer/asiancontinuous.m}. All computations are performed on a desktop computer with an Intel Xeon 3.50GHz CPU.}
The LNS computation times are all in the order of miliseconds. Although the EE is very accurate, it comes at the cost of long computation times (in the order of several seconds) caused by the expensive evaluations of the Whittaker function. The LS does not require calls to special functions, however the method is slowed down by the numerical integration involved in computing the moments of the reciprocal of the average. The implementation of both the EE and LS require the use of software that can handle symbolic mathematics, in contrast to the implementation of the LNS. The VEC method is the fastest among the benchmarks considered in this paper, but still an order of magnitude slower than the LNS.

\begin{center}
[Table \ref{table:computation_times} around here]
\end{center}

\section{Conclusion}\label{section:conclusion}
We have presented a series expansion for the continuously sampled arithmetic Asian option using polynomials that are orthogonal with respect to the log-normal distribution. The terms in the series are fully explicit and do not require any numerical integration or special functions, which makes the method very fast. We have shown that the series does not diverge if the volatility of the auxiliary log-normal distribution is sufficiently high. However, the series is not guaranteed to converge to the true price. We have investigated this asymptotic bias numerically and found that its magnitude is related to the size of $\tau=\sigma^2 T$.

There are several extensions to our method. First of all, we can handle discretely monitored Asian options using exactly the same setup, but replacing the moments of the continuous average with those of the discrete average. The latter are easily computed using iterated expectations. Secondly, we only look at fixed-strike Asian options in this paper. Since the process $\left(S_t,\int_0^t S_u\,\dd u\right)$ is jointly a polynomial diffusion, we can compute all of its mixed moments. Our method can then be extended to price floating-strike Asian options by using a bivariate log-normal as auxiliary distribution. Finally, we can define the auxiliary density $w$ as a mixture of log-normal densities, as studied in \cite{ackerer2017option}. Using a mixture allows to match higher order moments, which can lead to a faster convergence of the approximating series.

\clearpage
\bibliography{references}
\bibliographystyle{chicago}

\clearpage

\appendixtitletocon
\begin{appendices}
\section{Scaling with auxiliary moments}\label{appendix:scaling}
In this appendix we show how to avoid rounding errors by scaling the problem using the moments of the auxiliary density $w$.

Using $(L^{-1})^\top L^{-1}=M^{-1}$ we get from \eqref{eq:truncated_series}:
\begin{align*}
\pi^{(N)}&=(f_0,\ldots,f_N)(\ell_0,\ldots,\ell_N)^\top\\
&=\e^{-rT}(\tilde{f}_0,\ldots,\tilde{f}_N)M^{-1}\e^{G_N T}H_N(0).
\end{align*}
Define $S\in\R^{(N+1)\times(N+1)}$ as the diagonal matrix with the moments of $w$ on its diagonal:
\begin{align*}
S=\text{diag}(s_0,\ldots,s_N),\quad s_i=\e^{i\mu+\frac{1}{2}i^2\nu^2}.
\end{align*}
We can now write
\begin{align}
\pi^{(N)}&=\e^{-rT}(\overline{f}_0,\ldots,\overline{f}_N)S^{-1} S M^{-1}SS^{-1}\e^{G_N T}SS^{-1}H_N(0) \nonumber \\
&=\e^{-rT}(\overline{f}_0,\ldots,\overline{f}_N)\overline{M}^{-1}\e^{\overline{G}_N T}H_N(0),\label{eq:option_price_scaled}
\end{align}
where we have defined the matrices $((\overline{G}_N)_{ij})_{0\le i,j\le N}$, $(\overline{M}_{ij})_{0\le i,j\le N}$ and the vector $(\overline{f}_0,\ldots,\overline{f}_N)^\top$ as
\begin{align*}
\overline{M}_{ij}=\e^{ij\nu^2},\quad 
\overline{f}_i=\e^{\mu+\frac{1}{2}(2i+1)\nu^2}\Phi(d_{i+1})-K\Phi(d_i),\quad 
(\overline{G}_N)_{ij}=
\begin{cases}
ir+\frac{1}{2} i(i-1)\sigma^2& j=i\\
\frac{i}{T}\e^{-\mu+\frac{1}{2}(1-2i)\nu^2}& j=i-1\\
0& else
\end{cases},
\end{align*}
for $i,j=0,\ldots,N$.
The components of $\overline{M}$ and $(\overline{f}_0,\ldots,\overline{f}_N)^\top$ grow \emph{much} slower for increasing $N$ as their counterparts $M$ and $(\tilde{f}_0,\ldots,\tilde{f}_N)^\top$, respectively. The vector $\e^{\overline{G}_N T}H_N(0)$ corresponds to the moments of $A_T$ divided by the moments corresponding to $w$. Since both moments grow approximately at the same rate, this vector will have components around one. This scaling is important since for large $N$ the raw moments of $A_T$ are enormous. This was causing trouble for example in \cite{dufresne2000laguerre}. We therefore circumvent the numerical inaccuracies coming from the explosive moment behavior by directly computing the relative moments.

The likelihood and payoff coefficients can be computed by performing a Cholesky decomposition on $\overline{M}$ instead of $M$:
\begin{align*}
(f_0,\ldots,f_N)^\top&=\e^{-rT} \overline{L}^{-1}(\overline{f}_0,\ldots,\overline{f}_N)^\top,\\
(\ell_0,\ldots,\ell_N)^\top&=\overline{L}^{-1}\e^{\overline{G}_N T}H_N(0),
\end{align*}
where $\overline{M}=\overline{L}\,\overline{L}^\top$ is the Cholesky decomposition of $\overline{M}$.

Remark that to compute the option price in \eqref{eq:option_price_scaled}, we do not necessarily need to do a Cholesky decomposition. Indeed, we only need to invert the matrix $\overline{M}$. Doing a Cholesky decomposition is one way to solve a linear system, but there are other possibilities. In particular, remark that $\overline{M}$ is a Vandermonde matrix and its inverse can be computed analytically (see e.g. \cite{turner1966inverse}). There also exist specific numerical methods to solve linear Vandermonde systems, see e.g. \cite{bjorck1970solution}. However, we have not found any significant differences between the Cholesky method and alternative matrix inversion techniques for the examples considered in this paper. 

For very large values of $\nu$, even the matrix $\overline{M}$ might become ill conditioned. In this case it is advisable to construct the orthonormal basis using the three-term recurrence relation:
\begin{lemma} \label{lemma:orthonormal_polynomials}
The polynomials $b_n\in \mathrm{Pol}_n(\R)$, $n=0,1,\ldots$, defined recursively by
 \begin{align*}
&b_0(x)=1,\quad b_1(x)=\frac{1}{\beta_{1}} (x-\alpha_{0}),\quad \\
& b_{n}(x)=\frac{1}{\beta_{n}}\left( (x-\alpha_{n-1})b_{n-1}(x)-\beta_{n-1}b_{n-2}(x)\right),\quad n=2,3,\ldots,
\end{align*}
with
\begin{align*}
&\alpha_n=\e^{\mu+\nu^2(n-\frac{1}{2})}\left(\e^{\nu^2(n+1)}+\e^{\nu^2 n}-1\right),\quad n=0,1,\ldots \\
&\beta_{n}=\e^{\mu+\frac{1}{2}\nu^2(3n-2)}\sqrt{\e^{\nu^2 n}-1},\quad n=1,2,\ldots,
\end{align*}
satisfy
\begin{align*}
\int_\R b_i(x)b_j(x)w(x)\,\dd x=\begin{cases}1& i=j\\0& else\end{cases},\quad i,j=0,1,\ldots.
\end{align*}
\end{lemma}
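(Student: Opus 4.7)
The plan is to invoke the standard three-term recurrence theory for orthonormal polynomials on the real line, and then to evaluate the recurrence coefficients explicitly using a log-normal shift identity. Since $w$ is a positive Borel measure on $(0,\infty)$ with all moments finite and infinite support, a unique orthonormal polynomial sequence $\{b_n\}_{n\ge 0}$ with positive leading coefficient exists. By the classical Favard-type theorem, this sequence satisfies a recurrence of the form
\begin{align*}
\beta_{n+1} b_{n+1}(x) = (x - \alpha_n) b_n(x) - \beta_n b_{n-1}(x), \qquad n\ge 0,
\end{align*}
with $b_{-1}\equiv 0$ and coefficients $\alpha_n = \langle x b_n, b_n\rangle_w$ and $\beta_n = \langle x b_n, b_{n-1}\rangle_w > 0$. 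After rearranging, this is exactly the recurrence stated in the Lemma, so the task reduces to verifying that these two inner products coincide with the stated closed-form expressions.

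The key computational tool is the shift identity
\begin{align*}
x\, w(x;\mu,\nu) = e^{\mu+\nu^2/2}\, w(x;\mu+\nu^2,\nu), \qquad x>0,
\end{align*}
which follows by completing the square in the exponent of $w$. Combined with the scaling relation $b_n^{(\mu+\nu^2,\nu)}(x) = b_n^{(\mu,\nu)}(x e^{-\nu^2})$ (obtained from the change of variables $y = x e^{-\nu^2}$ in the orthonormality integral, where the superscript records the parameters of the underlying weight), these identities reduce any inner product of the form $\langle x p, q\rangle_{w(\mu,\nu)}$ to a standard inner product against the shifted weight, whose orthonormal basis is an explicit rescaling of the original one.

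The explicit formulas for $\alpha_n$ and $\beta_n$ can now be verified by induction on $n$. The base cases $n=0,1$ follow from direct evaluation of low-order log-normal moments and match the stated formulas. For the inductive step, the shift identity rewrites each of $\alpha_n$ and $\beta_n$ as an inner product against the shifted weight, which one then evaluates by expanding $b_n^{(\mu,\nu)}$ in the basis $\{b_k^{(\mu+\nu^2,\nu)}\}_{k=0}^n$; the resulting finite sum involves only connection coefficients that the inductive hypothesis controls together with the leading and sub-leading coefficients of $b_n$. The main obstacle is the algebraic bookkeeping of these connection coefficients through the recursion; a cleaner route that avoids any explicit manipulation of the orthonormal polynomials is to recognise the $b_n$ as a rescaling of the classical Stieltjes--Wigert polynomials and to quote the recurrence coefficients directly from the standard literature on $q$-orthogonal polynomials.
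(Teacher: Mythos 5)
Your proposal takes essentially the same route as the paper: the paper's entire proof is a one-line appeal to the classical three-term recurrence theorem (Theorem 1.29 in Gautschi, 2004) combined with the log-normal moments obtained from the normal moment-generating function, which is precisely your reduction to verifying $\alpha_n=\langle x b_n,b_n\rangle_w$ and $\beta_n=\langle x b_n,b_{n-1}\rangle_w$. The only difference is that you sketch (without completing) explicit machinery for that verification --- the shift identity $x\,w(x;\mu,\nu)=\e^{\mu+\nu^2/2}w(x;\mu+\nu^2,\nu)$ with a connection-coefficient induction, falling back on the Stieltjes--Wigert identification --- whereas the paper simply declares the computation straightforward, so your argument is no less complete than the source's and the low-order checks ($\alpha_0=\e^{\mu+\nu^2/2}$, $\beta_1=\e^{\mu+\nu^2/2}\sqrt{\e^{\nu^2}-1}$) confirm the stated formulas.
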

\begin{proof}
Straightforward application of the moment-generating function of the normal distribution and Theorem 1.29 in \cite{gautschi2004orthogonal}.
\end{proof}
The above recursion suffers from rounding errors in double precision arithmetic for small $\nu$, but is very accurate for large $\nu$.

\section{Control variate for simulating $g(x)$} \label{appendix:control_variate}
In order to increase the efficiency of the Monte-Carlo simulation of $g(x)$, we describe in this appendix how to use the density of the geometric average as a control variate. This idea is inspired by \cite{kemna1990pricing}, who report a very substantial variance reduction when using the geometric Asian option price as a control variate in the simulation of the arithmetic Asian option price. Denote by $Q_T=\exp\left(\frac{1}{T}\int_0^T \log(S_s)\,\dd s\right)$ the geometrical price average. It is not difficult to see that $\log(Q_T)$ is normally distributed with mean $\frac{1}{2}(r-\frac{1}{2}\sigma^2)T$ and variance $\frac{\sigma^2}{3}T$. Hence, $Q_T$ is log-normally distributed and we know its density function, which we denote by $q(x)$, explicitly. Similarly as in Lemma \ref{lemma:malliavin}, we can also express $q(x)$ as an expectation:
\begin{lemma}
For any $x\in\R$
\begin{align}
q(x)=\E\left[ \left(1_{\{Q_T\ge x\}} -c(x)\right)\left( \frac{2B_T}{\sigma T Q_T}+\frac{1}{Q_T}\right)\right],\label{eq:density_malliavin_geometric}
\end{align}
where $c$ is any deterministic finite-valued function.
\label{lemma:malliavin_geometric}
\end{lemma}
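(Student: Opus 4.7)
The plan is to mirror the Malliavin-calculus proof of Lemma \ref{lemma:malliavin}, but applied to $Q_T$ instead of $A_T$. The key simplification here is that $\log Q_T$ is explicitly a Gaussian random variable, so $Q_T$ is log-normal and integrability considerations are straightforward.

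First I would write $\log S_s=(r-\tfrac{1}{2}\sigma^2)s+\sigma B_s$, so that
\[
\log Q_T=\tfrac{1}{2}(r-\tfrac{1}{2}\sigma^2)T+\frac{\sigma}{T}\int_0^T B_s\,\dd s.
\]
Because $D_t B_s=1_{\{t\le s\}}$, the Malliavin derivative is
\[
D_t Q_T=Q_T\cdot\frac{\sigma}{T}\int_t^T\dd s=\frac{\sigma(T-t)}{T}\,Q_T,\qquad t\in[0,T].
\]
Integrating over $t\in[0,T]$ with constant weight $u\equiv 1$ yields $\langle DQ_T,u\rangle_H=\tfrac{\sigma T}{2}Q_T$, which is strictly positive a.s.

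Next I would invoke the standard Nualart integration-by-parts formula: for any random variable $F$ with $\langle DF,u\rangle_H\neq0$ a.s. and any finite-valued function $c$,
\[
p_F(x)=\E\!\left[\bigl(1_{\{F\ge x\}}-c(x)\bigr)\,\delta\!\left(\frac{u}{\langle DF,u\rangle_H}\right)\right],
\]
the $c(x)$-term disappearing because Skorokhod integrals have mean zero. Applied to $F=Q_T$, this reduces the task to evaluating $\delta\bigl(\tfrac{2}{\sigma T Q_T}\bigr)$. Using the product rule $\delta(Fh)=F\,\delta(h)-\langle DF,h\rangle_H$ with constant $h\equiv 1$ gives
\[
\delta\!\left(\frac{2}{\sigma T Q_T}\right)=\frac{2B_T}{\sigma T Q_T}-\int_0^T D_t\!\left(\frac{2}{\sigma T Q_T}\right)\dd t.
\]
From $D_t(1/Q_T)=-(1/Q_T^2)D_tQ_T=-\sigma(T-t)/(TQ_T)$, the remaining integral evaluates to $-\tfrac{2}{T^2 Q_T}\!\int_0^T(T-t)\,\dd t=-1/Q_T$, so the Skorokhod integral simplifies to $\tfrac{2B_T}{\sigma T Q_T}+\tfrac{1}{Q_T}$, which is exactly the weight appearing in \eqref{eq:density_malliavin_geometric}.

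The only real obstacle is justifying that $Q_T$ lies in the appropriate Malliavin Sobolev space so that the integration-by-parts formula applies; this is routine since $Q_T=\exp(Y_T)$ with $Y_T$ Gaussian, so $Q_T$ and $1/Q_T$ have moments of all orders, and $Y_T$ is a smooth Wiener functional. Once these technicalities are in place, the calculation above is essentially a one-line application of the IBP formula.
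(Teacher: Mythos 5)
Your proposal is correct and follows essentially the same route as the paper: compute $D_tQ_T=\frac{\sigma(T-t)}{T}Q_T$, apply the Malliavin integration-by-parts density formula with the constant weight to reduce the problem to $\delta\bigl(\frac{2}{\sigma T Q_T}\bigr)$, factor the random variable out of the Skorokhod integral to obtain $\frac{2B_T}{\sigma T Q_T}+\frac{1}{Q_T}$, and absorb $c(x)$ via the zero-mean property of the Skorokhod integral. No substantive differences from the paper's argument.
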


We now propose the following estimator for the density of the arithmetic average:
\begin{align}
g(x)=&\E\left[ (1_{\{A_T\ge x\}} -c_1(x)) \frac{2}{\sigma^2}\left( \frac{S_T-S_0}{T A_T^2}+\frac{\sigma^2-r}{A_T}\right)\right. \nonumber\\
&\left.+\left(q(x)-\left(1_{\{Q_T\ge x\}} -c_2(x)\right)\left( \frac{2B_T}{\sigma T Q_T}+\frac{1}{Q_T}\right)\right)\right],
\label{eq:density_control_variate}
\end{align}
for some deterministic finite-valued functions $c_1$ and $c_2$. Given the typically high correlation between the geometric and arithmetic average, the above estimator has a significantly smaller variance than the estimator in \eqref{eq:density_malliavin}. In numerical examples the functions $c_1$ and $c_2$ are chosen as follows:
\begin{align*}
c_1(x)=1_{x\le m^A_1},\quad c_2(x)=1_{x\le m^Q_1},
\end{align*}
where $m^A_1$ and $m^Q_1$ denote the first moments of $A_T$ and $Q_T$, respectively.

Finally, we use \eqref{eq:density_control_variate} to express $\lVert \ell\rVert^2_w$ as an expectation that can be evaluated using Monte-Carlo simulation:
\begin{align}
\lVert \ell\rVert^2_w=
\E&\left[ \frac{(1_{\{A_T\ge \tilde{A}_T\}} -c_1(\tilde{A}_T)) \frac{2}{\sigma^2}\left( \frac{S_T-S_0}{T A_T^2}+\frac{\sigma^2-r}{A_T}\right)}{w(\tilde{A}_T)} \right.\nonumber\\
&+\left. \frac{q(\tilde{A}_T)-\left(1_{\{Q_T\ge \tilde{A}_T\}} -c_2(\tilde{A}_T)\right)\left( \frac{2B_T}{\sigma T Q_T}+\frac{1}{Q_T}\right)}{w(\tilde{A}_T)}\right],
\label{eq:likelihood_ratio_MC_control_variate}
\end{align}
where the random variable $\tilde{A}_T$ is independent from all other random variables and has the same distribution as $A_T$.
In numerical examples we find a variance reduction of roughly a factor ten.

\section{Proofs}\label{appendix:proofs}
This appendix contains all the proofs.
\subsection{Proof of Lemma \ref{lemma:identity_law}}
Using the time-reversal property of a Brownian motion, we have the following identity in law for fixed $t>0$ :
\begin{align*}
t A_t&=\int^t_0 \e^{(r-\frac{1}{2}\sigma^2)u+\sigma B_u}\,\dd u\\
&\overset{\text{law}}{=}
\int^t_0 \e^{(r-\frac{1}{2}\sigma^2)(t-u)+\sigma (B_t-B_u)}\,\dd u\\
&=S_t \int_0^t S_u^{-1}\,\dd u.
\end{align*}
Applying It\^o's lemma to $X_t:=S_t \int_0^t S_u^{-1}\,\dd u$ gives
\begin{align*}
\dd X_t&=S_t S_t^{-1}\,\dd t+  \int_0^t S_u^{-1}\,\dd u \,(rS_t\,\dd t+\sigma S_t\,\dd B_t)\\
&=(rX_t +1)\,\dd t+\sigma X_t\,\dd B_t.
\end{align*}

\subsection{Proof of Proposition \ref{prop:moments}}
Applying the infinitesimal generator $\Gcal$ corresponding to the diffusion in \eqref{eq:garch_diffusion} to a monomial $x^n$ gives:
\begin{align*}
\Gcal x^n=x^n(nr+\frac{1}{2} n(n-1)\sigma^2)+nx^{n-1}.
\end{align*}
Hence, we have $\Gcal H_n(x)=\tilde{G}_nH_n(x)$ componentwise, where $\tilde{G}_n$ is defined as
\begin{align*}
\tilde{G}_n=
\begin{pmatrix}
0&\\
1&r&\\
&\ddots&\ddots&\\
&&n&(nr+\frac{1}{2}n(n-1)\sigma^2)
\end{pmatrix}.
\end{align*}
 
Using the identity in distribution of Lemma \ref{lemma:identity_law}, we get
\begin{align*}
\E[H_n(A_T)]&=\mathrm{diag}(H_n(T^{-1}))\E[H_n(X_T)]\\
&=\mathrm{diag}(H_n(T^{-1}))\left(H_n(X_0)+\int_0^T\E[\Gcal H_n(X_u)]\,\dd u\right) \\
&=\mathrm{diag}(H_n(T^{-1}))H_n(0)+\mathrm{diag}(H_n(T^{-1}))\tilde{G}_n\int_0^T\E[H_n(X_u)]\,\dd u\\
&=H_n(0)+\mathrm{diag}(H_n(T^{-1}))\tilde{G}_n\mathrm{diag}(H_n(T))\int_0^T\,\E[H_n(A_u)]\,\dd u\\
&=H_n(0)+G_n\int_0^T\,\E[H_n(A_u)]\,\dd u,
\end{align*}
where $G_n$ was defined in \eqref{eq:generator_matrix}. The result now follows from solving the above linear first order ODE.

\subsection{Proof of Proposition \ref{prop:upper_tail}}
\begin{enumerate}
\item We will show that the solution at time $T>0$ of the SDE in \eqref{eq:garch_diffusion} admits a smooth density function. The claim then follows by the identity in law.

Define the volatility and drift functions $a(x)=\sigma x$ and $b(x)=rx+1$. H\"ormander's condition (see for example Section 2.3.2 in \cite{nualart2006malliavin}) becomes in this case:
\begin{align*}
a(X_0)\neq 0 \quad \text{or}\quad a^{(n)}(X_0)b(X_0)\neq 0 \,\, \text{for some $n\ge 1$}.
\end{align*}
H\"ormander's condition is satisfied since for $n=1$ we have $a'(0)b(0)=\sigma \neq 0$. Since $a(x)$ and $b(x)$ are infinitely differentiable functions with bounded partial derivatives of all orders, we conclude by Theorem 2.3.3 in \cite{nualart2006malliavin} that $X_T$, and therefore $A_T$, admits a smooth density function.

\item 

We start from the following two observations:
\[
A_T\le \sup_{0\le u\le T} S_u\quad  \text{and}\quad
P\left(\sup_{0\le u\le T} \sigma B_u \ge x\right)=2P\left(Z\ge \frac{x}{\sigma \sqrt{T}}\right),
\]
where $Z\sim N(0,1)$. Using the fact that the exponential is an increasing function, we get
\begin{align*}
P(A_T\ge x)&\le P\left( \sup_{0\le u\le T} S_u\ge x\right)\\
&=P\left( \sup_{0\le u\le T} (r-\frac{1}{2}\sigma^2)u+\sigma B_u \ge \log(x)\right)\\
&\le 
\begin{cases}
2P\left(Z\ge \frac{\log(x)-(r-\frac{1}{2}\sigma^2) T}{\sigma\sqrt{T}}\right)& \text{if}\quad r\ge \frac{1}{2}\sigma^2\\
2P\left(Z\ge \frac{\log(x)}{\sigma\sqrt{T}}\right)& \text{if}\quad r\le \frac{1}{2}\sigma^2
\end{cases}.
\end{align*}
Applying the rule of l'H\^opital gives
\begin{align*}
\lim_{x\to\infty} g(x)\left(\e^{-\frac{\log(x)^2}{2\sigma^2 T}}\right)^{-1}&
=\lim_{x\to\infty}P(A_T\ge x)\left(\int_x^\infty\e^{-\frac{\log(y)^2}{2\sigma^2 T}}\,\dd y\right)^{-1}\\
&\le 2\frac{1}{\sqrt{2\pi T}\sigma}\lim_{x\to\infty}\int_x^\infty\e^{-\frac{(\log(y)-(r-\frac{1}{2}\sigma^2)^+T)^2}{2\sigma^2 T}}\,\dd y
\left(\int_x^\infty\e^{-\frac{\log(y)^2}{2\sigma^2 T}}\,\dd y\right)^{-1}\\
&= \sqrt{\frac{2}{\pi T}}\frac{1}{\sigma}.
\end{align*}
Hence we have show that $g(x)=\Ocal\left(\exp\left\{-\frac{1}{2}\frac{\log(x)^2}{\sigma^2 T}\right\} \right)$ for $x\to\infty$.

Since the exponential is a convex function, we have that the arithmetic average is always bounded below by the geometric average:
\begin{align*}
A_T\ge Q_T=\exp\left(\frac{1}{T}\int_0^T \log(S_s)\,\dd s\right).
\end{align*}
It is not difficult to see that $\log(Q_T)$ is normally distributed with mean $\frac{1}{2}(r-\frac{1}{2}\sigma^2)T$ and variance $\frac{\sigma^2}{3}T$. By similar arguments as before we therefore have
\begin{align*}
g(x)=\Ocal \left( \exp\left\{-\frac{3}{2}\frac{\log(x)^2}{\sigma^2T}\right\} \right)\quad \text{for}\quad x\to 0.
\end{align*}

\end{enumerate}

\subsection{Proof of Proposition \ref{prop:likelihood_ratio}}
We can write the squared norm of $\ell$ as
\begin{align}
\lVert \ell\rVert_w^2&=\int_{0}^\infty \left(\frac{g(x)}{w(x)}\right)^2 w(x)\,\dd x\nonumber\\
&=
\int_0^a\frac{g(x)^2}{w(x)}\,\dd x+
\int_a^b \frac{g(x)^2}{w(x)}\,\dd x+
\int_b^\infty \frac{g(x)^2}{w(x)}\,\dd x,\label{eq:proof:lr}
\end{align}
for some $0<a<b<\infty$. The second term is finite since the function $\frac{g^2}{w}$ is continuous over the compact interval $[a,b]$. From Proposition \ref{prop:upper_tail} we have
\[
g(x)^2=\Ocal\left(\exp\left\{-\frac{\log(x)^2}{\sigma^2T}\right\} \right),\quad \text{for}\quad x\to\infty \quad \text{and}\quad x\to 0.
\] 
For the log-normal density we have
\[
w(x)=\Ocal\left(\exp\left\{-\frac{\log(x)^2}{2\nu}\right\} \right),\quad \text{for}\quad x\to\infty \quad \text{and}\quad x\to 0.
\] 
Since $2\nu> \sigma^2T$ by assumption, we are guaranteed that the first and last term in \eqref{eq:proof:lr} are finite for a sufficiently small (resp.\ large) choice of $a$ (resp.\ $b$).

\subsection{Proof of Proposition \ref{prop:payoff_coeff}}
The payoff coefficients can be written as
\begin{align*}
(f_0,\ldots,f_N)^\top&=e^{-rT}C(\tilde{f}_0,\ldots,\tilde{f}_N)^\top,
\end{align*}
with
\begin{align*}
\tilde{f}_n&=\frac{1}{\sqrt{2\pi}\nu}\int_0^\infty(e^x-K)^+\e^{nx}\e^{-\frac{(x-\mu)^2}{2\nu^2}}\,\dd x\\
&=\frac{1}{\sqrt{2\pi}\nu}\left(\int_{\log(K)}^\infty \e^{(n+1)x}\e^{-\frac{(x-\mu)^2}{2\nu^2}}\,\dd x-K\int_{\log(K)}^\infty \e^{nx}\e^{-\frac{(x-\mu)^2}{2\nu^2}}\,\dd x\right).
\end{align*}
Completing the square in the exponent gives
\begin{align*}
\frac{1}{\sqrt{2\pi}\nu}\int_{\log(K)}^\infty \e^{nx}\e^{-\frac{(x-\mu)^2}{2\nu^2}}\,\dd x
&=\frac{1}{\sqrt{2\pi}\nu}\e^{\mu n +\frac{1}{2}n^2\nu^2}\int_{\log(K)}^\infty\e^{-\frac{(x-(\mu+\nu^2 n))^2}{2\nu^2}}\,\dd x\\
&=\frac{1}{\sqrt{2\pi}}\e^{\mu n +\frac{1}{2}n^2\nu^2}\int_{\frac{\log(K)-(\mu+\nu^2 n)}{\nu}}^\infty\e^{-\frac{1}{2}y^2}\,\dd y\\
&=\e^{\mu n +\frac{1}{2}n^2\nu^2}\Phi(d_n),
\end{align*}
where $d_n$ is defined in \eqref{eq:payoff_coeff_d}. We finally get
\begin{align*}
\tilde{f}_n=\e^{\mu (n+1) +\frac{1}{2}(n+1)^2\nu^2}\Phi(d_{n+1})-K\e^{\mu n +\frac{1}{2}n^2\nu^2}\Phi(d_n).
\end{align*}

\subsection{Proof of Lemma \ref{lemma:malliavin}}
This proof is based on Malliavin calculus techniques, we refer to \cite{nualart2006malliavin} for an overview of standard results in this area. A similar approach is taken by \cite{fournie1999applications} to compute the Greeks of an Asian option by Monte-Carlo simulation.

Denote by $D: \mathbb{D}^{1,2}\to L^2(\Omega\times [0,T]),\, F\mapsto \left\{D_t F, t\in [0,T]\right\}$, the Malliavin derivative operator. By Theorem 2.2.1 in \citep{nualart2006malliavin} we have $S_t,A_t\in \mathbb{D}^{1,2}$ for $t\in (0,T]$ and
\begin{align*}
D_uS_t=\sigma S_t 1_{\left\{u\le t\right\}},\quad D_u A_t=\frac{\sigma}{t} \int_u^tS_s\,\dd s.
\end{align*}
Denote by 
\[\delta: Dom(\delta)\to L^2(\Omega),\, \left\{ X_t, t\in[0,T]\right\}\mapsto \delta(X)\] 
the Skorohod integral and by $Dom(\delta)\subseteq  L^2(\Omega\times [0,T])$ the corresponding domain. The Skorohod integral is defined as the adjoint operator of the Malliavin derivative and can be shown to extend the It\^o integral to non-adapted processes. In particular, we have immediately that $\left\{ S_t, t\in[0,T]\right\}\in Dom(\delta)$ and 
\begin{equation}
\delta(S)=\int_0^T S_s\,\dd B_s.\label{eq:ito_skor}
\end{equation}

For $\phi\in C^\infty_c$ we have $\phi(A_T)\in \mathbb{D}^{1,2}$ and 
\begin{align*}
\int_0^T (D_u\phi(A_T))S_u\,\dd u=\phi'(A_T)\int_0^T (D_u A_T) S_u\,\dd u.
\end{align*}
Using the duality relationship between the Skorohod integral and the Malliavin derivative we get
\begin{align}
\E[\phi'(A_T)]&=\E\left[\int_0^T (D_u\phi(A_T))\frac{S_u}{\int_0^T (D_u A_T) S_u\,\dd u}\,\dd u\right]\nonumber\\
&=\E\left[\phi(A_T)\delta\left(\frac{S}{\int_0^T (D_u A_T) S_u\,\dd u}\right)\right].\label{eq:IBP_malliavin}
\end{align}
By Lemma 1 in \cite{bally2003elementary} (see also Proposition 2.1.1 in \cite{nualart2006malliavin} for a similar approach) we obtain the following representation of the density function of $A_T$:\footnote{Informally speaking one applies a regularization argument in order to use \eqref{eq:IBP_malliavin} for the (shifted) Heaviside function $\phi(y)=1_{\{y\ge x\}}$.}
\begin{align}
g(x)&=\E\left[1_{\{A_T\ge x\}}\delta\left(\frac{S}{\int_0^T (D_u A_T) S_u\,\dd u}\right)\right] \nonumber\\
&=\frac{T}{\sigma} \E\left[1_{\{A_T\ge x\}}\delta\left(\frac{S}{\int_0^T S_u \int_u^T S_s\,\dd s \,\dd u}\right)\right].\label{eq:density_malliavin_proof}
\end{align}
Interchanging the order of integration gives
\begin{align*}
\int_0^T S_u \int_u^T S_s\,\dd s\,\dd u&=\left(\int_0^T S_u\,\dd u\right)^2-\int_0^T  \int_0^u S_u S_s\,\dd s\,\dd u\\
&=\left(\int_0^T S_u\,\dd u\right)^2-\int_0^T S_s \int_s^T S_u\,\dd u\,\dd s,
\end{align*}
which gives $
\int_0^T S_u \int_u^T S_s\,\dd s\,\dd u=\frac{T^2}{2}A_T^2.
$
Plugging this into \eqref{eq:density_malliavin_proof} gives
\begin{align*}
g(x)&=\frac{2}{T\sigma} \E\left[1_{\{A_T\ge x\}} \delta\left(\frac{S}{ A_T^{2}}\right)\right].
\end{align*}
We use Proposition 1.3.3 in \cite{nualart2006malliavin} to factor out the random variable $A_T^{-2}$ from the Skorohod integral:
\begin{align*}
\delta\left(\frac{S}{ A_T^{2}}\right)&=
A_T^{-2}\delta(S)-\int_0^TD_t\left(A_T^{-2}\right) S_t\,\dd t\\
&=
A_T^{-2}\frac{1}{\sigma}\left(S_T-S_0-r\int_0^TS_s\,\dd s\right)-\int_0^TD_t\left(A_T^{-2}\right) S_t\,\dd t,
\end{align*}
where we used \eqref{eq:ito_skor} in the last equation. Using the chain rule for the Malliavin derivative we get
\begin{align*}
\delta\left(\frac{S}{ A_T^{2}}\right)&=
A_T^{-2}\frac{1}{\sigma}\left(S_T-S_0-r\int_0^TS_s\,\dd s\right)+2A_T^{-3}\int_0^TD_t A_T S_t\,\dd t\\
&=
A_T^{-2}\frac{1}{\sigma}\left(S_T-S_0-rT A_T\right)+2A_T^{-3}\frac{1}{T}\int_0^T S_t \int_t^T\sigma S_u\,\dd u \,\dd t\\
&=
A_T^{-2}\frac{1}{\sigma}\left(S_T-S_0-r T A_T\right)+ A_T^{-1}\sigma T\\
&=A_T^{-2}\frac{1}{\sigma}\left(S_T-S_0\right)+\frac{T}{\sigma} A_T^{-1}(\sigma^2-r).
\end{align*}
Putting everything back together we finally get:
\begin{align*}
g(x)=\frac{2}{\sigma^2}\E\left[ 1_{\{A_T\ge x\}} \left( \frac{S_T-S_0}{T A_T^2}+\frac{\sigma^2-r}{A_T}\right)\right].
\end{align*}
Since the Skorohod integral has zero expectation we also have
\begin{align*}
g(x)=\frac{2}{\sigma^2}\E\left[ \left( 1_{\{A_T\ge x\}}-c(x)\right)\left( \frac{S_T-S_0}{T A_T^2}+\frac{\sigma^2-r}{A_T}\right)\right],
\end{align*}
for any deterministic finite-valued function $c$.

\subsection{Proof of Corollary \ref{corollary:likelihood_ratio_MC}}
The result follows immediately from \eqref{eq:density_malliavin} and
\begin{align*}
\lVert \ell \rVert^2_w&= \int_0^\infty \ell^2(x)w(x)\,\dd x=\int_0^\infty \frac{g(x)}{w(x)}g(x)\,\dd x .
\end{align*}

\subsection{Proof of Proposition \ref{prop:error_bounds} }
Using the Cauchy-Schwarz inequality and the orthonormality of the polynomials $b_0,\ldots,b_N$ we get 
\begin{align*}
|\pi-\pi^{(N)}|&=\left|\langle F,l\rangle_w-\sum_{n=0}^N f_n \ell_n\right|\\
&=\left\langle F- \sum_{n=0}^Nb_n f_n \, ,\, \ell- \sum_{n=0}^Nb_n \ell_n\right\rangle_w\\
&\le \left\lVert F- \sum_{n=0}^Nb_n f_n \right \rVert_w \, \left\lVert\ell- \sum_{n=0}^Nb_n \ell_n \right \rVert_w\\
&=\left(\lVert F\rVert^2_w- \sum_{n=0}^N f_n^2\right)^{\frac{1}{2}}\left(\lVert \ell\rVert^2_w- \sum_{n=0}^N \ell_n^2\right)^{\frac{1}{2}}.
\end{align*}

\subsection{Proof of Lemma \ref{lemma:malliavin_geometric}}
Applying the Malliavin derivative to $Q_T$ gives
\begin{align*}
D_uQ_T=Q_T D_u \left(\frac{1}{T}\int_0^T \log(S_s)\,\dd s\right)=Q_T\frac{\sigma}{T}(T-u)1_{u\le T}.
\end{align*}
Similarly as in the proof of Lemma \ref{lemma:malliavin} we can write
\begin{align}
q(x)&=\E\left[1_{\{Q_T\ge x\}}\delta\left(\frac{1}{\int_0^T D_u Q_T\,\dd u}\right)\right]\nonumber \\
&= \E\left[1_{\{Q_T\ge x\}}\delta\left(\frac{2}{Q_T\sigma T}\right)\right].\label{eq:density_malliavin_geometric_proof}
\end{align}
Using Proposition 1.3.3 in \cite{nualart2006malliavin} to factor out the random variable from the Skorohod integral gives
\begin{align*}
 \delta\left(\frac{2}{Q_T\sigma T}\right)&=\frac{2B_T}{\sigma TQ_T}-\frac{2}{\sigma T}\int_0^T D_u(Q_T^{-1})\,\dd u\\
 &=\frac{2B_T}{\sigma TQ_T}-\frac{2}{\sigma T}Q_T^{-2}\int_0^T Q_T\frac{\sigma}{T}(T-u)\,\dd u\\
 &=\frac{2B_T}{\sigma T Q_T}+\frac{1}{Q_T}.
\end{align*}
Plugging this back into \eqref{eq:density_malliavin_geometric_proof} finally gives
\begin{align*}
q(x)=\E\left[1_{\{Q_T\ge x\}}\left(\frac{2B_T}{\sigma T Q_T}+\frac{1}{Q_T}\right)\right].
\end{align*}
Since the Skorohod integral has zero expectation we also have
\begin{align*}
q(x)=\E\left[\left(1_{\{Q_T\ge x\}}-c(x)\right)\left(\frac{2B_T}{\sigma T Q_T}+\frac{1}{Q_T}\right)\right],
\end{align*}
for any deterministic finite-valued function $c$.
\end{appendices}

\clearpage
\setlength{\tabcolsep}{3pt}
\begin{table}
\centering
\begin{tabular}{cccccccccccl@{\hspace{1.2pt}}c@{\hspace{1.2pt}}r}
\toprule
Case	& $r$	& $\sigma$	&$T$	&$S_0$	& LNS10	& LNS15&	LNS20 	&	LS	& EE	& VEC & 	\multicolumn{3}{c}{MC 95\% CI} \\\midrule
1		& .02	& .10		&1		& 2.0	&.05601	&.05600	&.05599 & .0197	& .05599& .05595& [.05598&,&.05599]	\\
2		&.18	&.30		&1		&2.0	&.2185 &.2184 	&.2184	&.2184	& .2184& .2184&	[.2183&,&.2185]	\\
3		&.0125	&.25		&2		&2.0	&.1723	&.1722 &.1722	&.1723	& .1723& .1723& [.1722&,&.1724]	\\
4		&.05	&.50		&1		&1.9	& .1930	&.1927	&.1928		&.1932	& .1932& .1932&[.1929&,& .1933]		\\
5		&.05	&.50		&1		&2.0	&.2466 &.2461 &.2461		&.2464	& .2464& .2464& [.2461&,&.2466]	\\
6		&.05	&.50		&1		&2.1	& .3068 &.3062 &.3061		&.3062	& .3062& .3062& [.3060&,&.3065] \\
7		&.05	&.50		&2		&2.0	&.3501&.3499 &.3499		&.3501	& .3501& .3500& [.3494&,& .3504]	\\
\bottomrule
\end{tabular}
\caption{Price approximations for different parameterizations and different methods. The strike price is $K=2$ for all cases. The column LNS$X$ refers to the method presented in this paper with the first $1+X$ terms of the series, LS to \cite{dufresne2000laguerre}, EE to \cite{linetsky2004spectral}, VEC to \cite{vecer2001new,vecer2002unified}, and MC 95\% CI to the 95\% confidence interval of the Monte-Carlo simulation.}
\label{table:benchmark_cases}
\end{table}

\begin{table}
\centering
\begin{tabular}{cccccccccccr}
\toprule
Case	& $r$	& $\sigma$	&$T$	&$S_0$	&	LNS10	& LNS15 & LNS20 &	LS & EE	& VEC&	\multicolumn{1}{c}{MC} \\\midrule
1 & .02 & .10 & 1 & 2.0 & .006 & .008 & .009 & .930 & - & .277 & 6.344 \\ 
2 & .18 & .30 & 1 & 2.0 & .002 & .002 & .003 & .666 & 2.901 & .345 & 5.518 \\ 
3 & .0125 & .25 & 2 & 2.0 & .002 & .002 & .002 & .635 & 3.505 & .374 & 12.138 \\ 
4 & .05 & .50 & 1 & 1.9 & .001 & .002 & .003 & .785 & 3.172 & .404 & 6.819 \\ 
5 & .05 & .50 & 1 & 2.0 & .001 & .002 & .002 & .701 & 2.768 & .404 & 5.432 \\ 
6 & .05 & .50 & 1 & 2.1 & .001 & .001 & .002 & .687 & 2.719 & .398 & 5.452 \\ 
7 & .05 & .50 & 2 & 2.0 & .002 & .002 & .004 & .594 & 2.202 & .438 & 11.699 \\  
\bottomrule
\end{tabular}
\caption{Computation times in seconds. The column LNS$X$ refers to the method presented in this paper with the first $1+X$ terms of the series, LS to \cite{dufresne2000laguerre}, EE to \cite{linetsky2004spectral}, VEC to \cite{vecer2001new,vecer2002unified}, and MC to the Monte-Carlo simulation.}
\label{table:computation_times}
\end{table}

\clearpage

\begin{figure}
\centering
\begin{subfigure}[b]{0.48\textwidth}
\includegraphics[width=\textwidth]{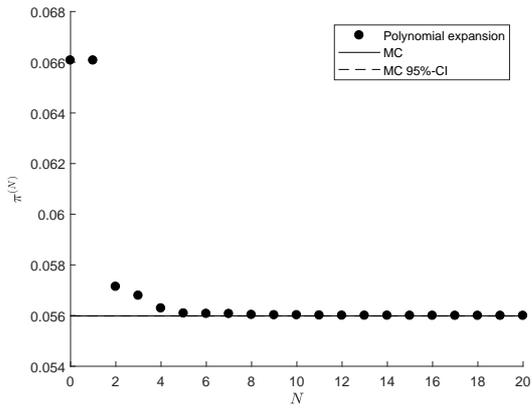}
\caption{Case 1}
\end{subfigure}
\begin{subfigure}[b]{0.48\textwidth}
\includegraphics[width=\textwidth]{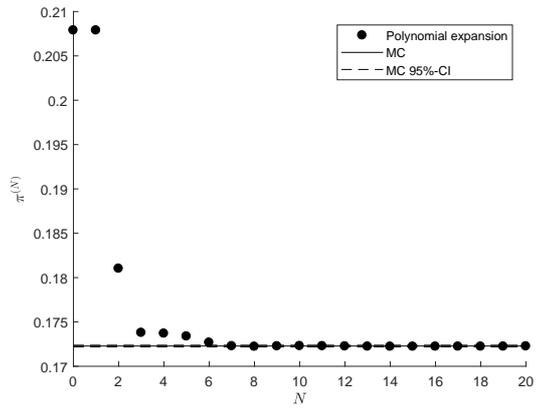}
\caption{Case 3}
\end{subfigure}
\begin{subfigure}[b]{0.48\textwidth}
\includegraphics[width=\textwidth]{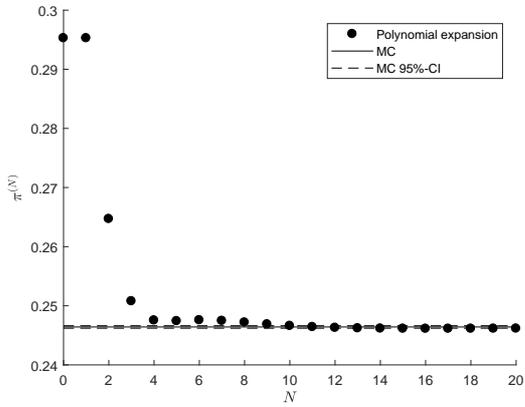}
\caption{Case 5}
\end{subfigure}
\begin{subfigure}[b]{0.48\textwidth}
\includegraphics[width=\textwidth]{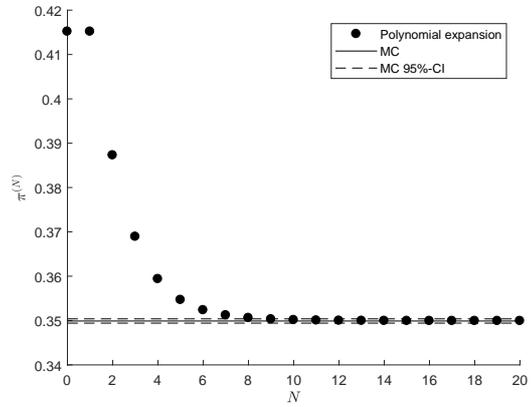}
\caption{Case 7}
\end{subfigure}
\caption{Asian option price approximations in function of polynomial approximation order $N$. The cases correspond to different parameterizations shown in Table \ref{table:benchmark_cases}.}
\label{fig:benchmark_cases}
\end{figure}

\begin{figure}
\centering
\begin{subfigure}[b]{0.48\textwidth}
\includegraphics[width=\textwidth]{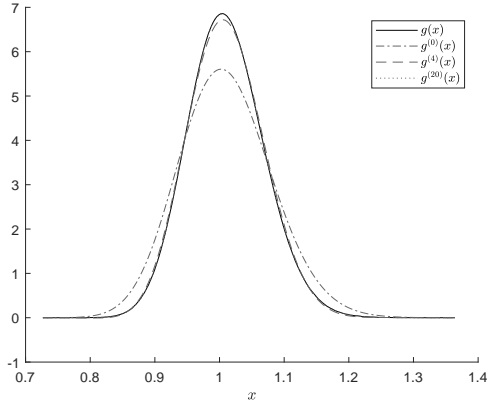}
\caption{Case 1}
\label{fig:density_approx_1}
\end{subfigure}
\begin{subfigure}[b]{0.48\textwidth}
\includegraphics[width=\textwidth]{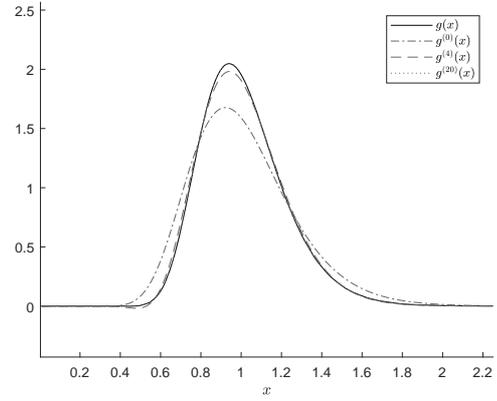}
\caption{Case 3}
\label{fig:density_approx_3}
\end{subfigure}
\begin{subfigure}[b]{0.48\textwidth}
\includegraphics[width=\textwidth]{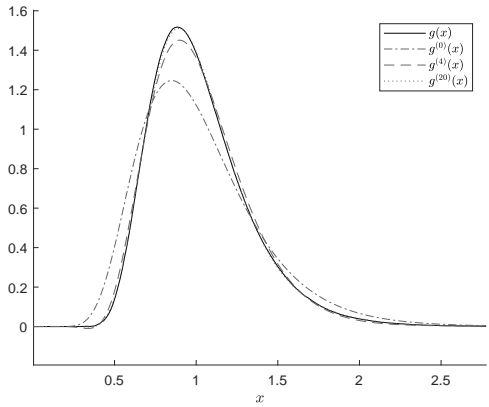}
\caption{Case 5}
\label{fig:density_approx_5}
\end{subfigure}
\begin{subfigure}[b]{0.48\textwidth}
\includegraphics[width=\textwidth]{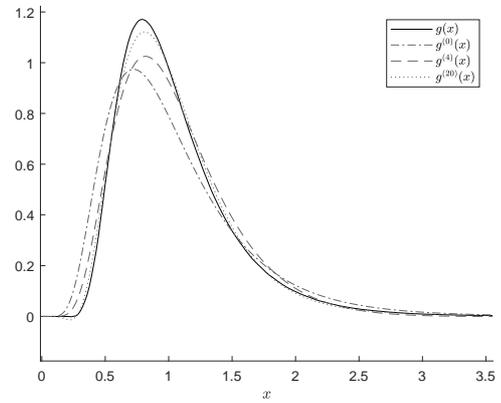}
\caption{Case 7}
\label{fig:density_approx_7}
\end{subfigure}
\caption{Simulated true density function $g(x)$ and approximated density functions $g^{(n)}(x)$, $n=0,4,20$. The cases correspond to different parameterizations shown in Table \ref{table:benchmark_cases}.}
\label{fig:density_approx}
\end{figure}

\begin{figure}
\centering
\begin{subfigure}[b]{0.48\textwidth}
\includegraphics[width=\textwidth]{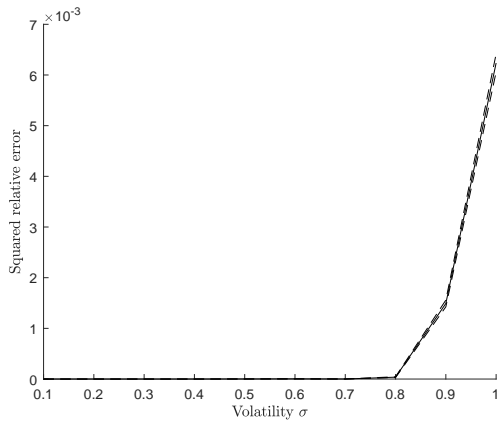}
\caption{Squared relative error}
\label{fig:range_vol_MC}
\end{subfigure}
\begin{subfigure}[b]{0.48\textwidth}
\includegraphics[width=\textwidth]{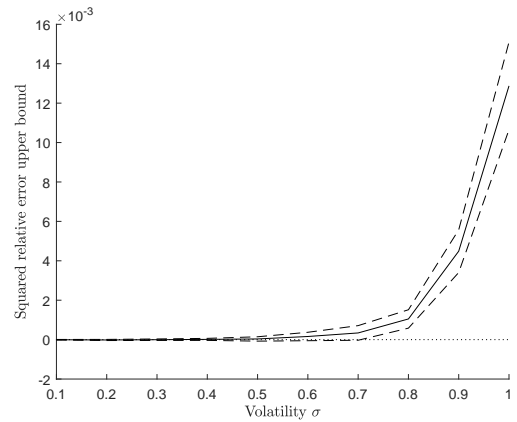}
\caption{Squared relative error upper bound}
\label{fig:range_vol_upper}
\end{subfigure}
\caption{Squared relative approximation error for different values of $\sigma$. Dashed lines correspond to the 95\% confidence intervals from the Monte-Carlo simulation. Parameters: $T=1$, $r=0.05$, $S_0=K=2$.}
\label{fig:range_vol}
\end{figure}

\begin{figure}
\centering
\begin{subfigure}[b]{0.48\textwidth}
\includegraphics[width=\textwidth]{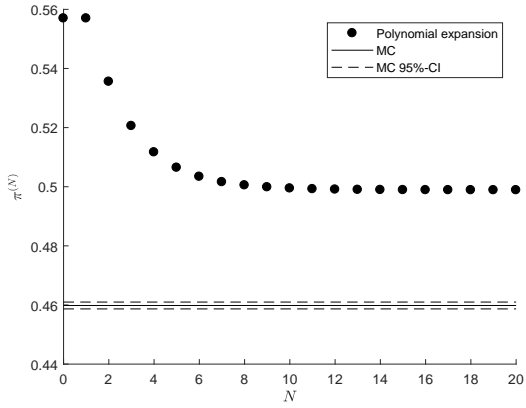}
\caption{Monte-Carlo simulated price}
\label{fig:high_vol_MC}
\end{subfigure}
\begin{subfigure}[b]{0.48\textwidth}
\includegraphics[width=\textwidth]{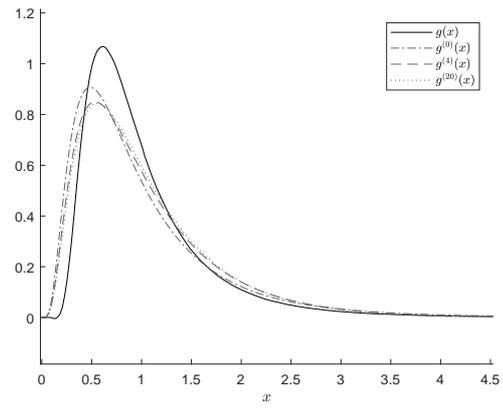}
\caption{Density approximations}
\label{fig:high_vol_density}
\end{subfigure}
\begin{subfigure}[b]{0.48\textwidth}
\includegraphics[width=\textwidth]{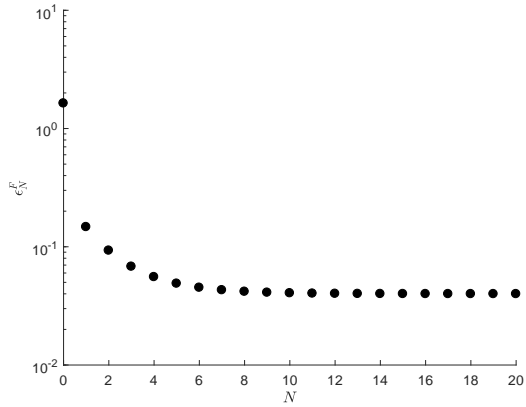}
\caption{Payoff projection error}
\label{fig:high_vol_payoffProjection}
\end{subfigure}
\begin{subfigure}[b]{0.48\textwidth}
\includegraphics[width=\textwidth]{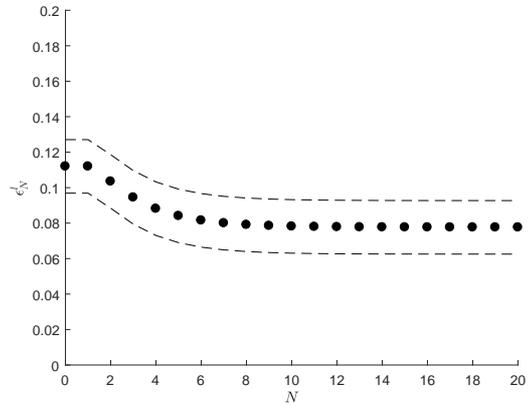}
\caption{Likelihood projection error}
\label{fig:high_vol_likelihoodProjection}
\end{subfigure}
\caption{Visualization of the projection bias in the extreme volatility case. Dashed lines correspond to the 95\% confidence intervals from the Monte-Carlo simulation. Parameters: $\sigma=1$, $T=1$, $r=0.05$, $S_0=K=2$.}
\label{fig:high_vol}
\end{figure}

\end{document}